\documentclass[journal]{IEEEtran}

\usepackage[T1]{fontenc}
\usepackage{ucs}
\usepackage[utf8x]{inputenc}
  
\usepackage[greek,english]{babel}
\usepackage[babel]{csquotes}

\usepackage[noadjust]{cite}

\usepackage{fixmath}
\usepackage{mathtools}
\usepackage{amsmath}
\usepackage{amsthm}
\usepackage{amssymb}
\usepackage{graphicx}
\usepackage{tikz}
\usepackage{pgfplots}
\usepackage{algorithm}
\usepackage{algorithmic}
\usepackage{stfloats}
\usepackage{environ}
\usepackage{subeqnarray}

\newcommand{\tmpand}{&}

\ifCLASSOPTIONdraftcls
\NewEnviron{mueq}{\begin{equation}\BODY\end{equation}}
\else
\NewEnviron{mueq}{\begin{multline}\BODY\end{multline}}
\fi

\DeclareMathAlphabet{\mathbit}{OML}{cmr}{bx}{it}
\DeclareMathAlphabet{\mathss}{T1}{lmss}{m}{it}
\DeclareMathAlphabet{\mathssbold}{T1}{lmss}{bx}{sl}
\DeclareMathAlphabet{\mathssgreek}{LGR}{lmss}{m}{sl}
\DeclareMathAlphabet{\mathssgreekbold}{LGR}{lmss}{bx}{sl}

\DeclareSymbolFont{sssymbols}{T1}{lmss}{m}{it}
\SetSymbolFont{sssymbols}{bold}{T1}{lmss}{bx}{it}
\DeclareMathAccent{\rtilde}{\mathalpha}{sssymbols}{"03}

\newcommand{\mbc}[1]{\mathssbold{#1}}
\newcommand{\mc}[1]{\mathss{#1}}

\newcommand{\mcg}[1]{\mathssgreek{#1}}
\newcommand{\mb}[1]{\mathbit{#1}}
\newcommand{\mt}[1]{\mathrm{#1}}  

\DeclareMathOperator{\TransposedOp}{T}

\newcommand{\crr}[1]{{\check{\mb #1}}}	

\newcommand{\one}{\boldsymbol{1}} 
\newcommand{\zero}{\boldsymbol{0}} 
\newcommand{\id}{\mathbf{I}} 

\newcommand{\J}{\mathrm{j}}
\newcommand{\E}{\mathrm{e}}

\newcommand{\EntrOp}{\operatorname{h}}
\newcommand{\Entr}[1]{\EntrOp\left(#1\right)}

\newcommand{\tr}[1]{\operatorname{tr}\lbrack#1\rbrack}	
\DeclareMathOperator{\st}{s.t.}

\newcommand{\Expect}[1]{\operatorname{E}\lbrack#1\rbrack}

\newcommand{\Tr}{{\TransposedOp}}
\newcommand{\bTr}{{\TransposedOp}}

\DeclareMathOperator*{\argmax}{argmax}

\newcommand{\allk}{{\forall k}}
\newcommand{\Ms}{L}
\newcommand{\ms}{\ell}
\newcommand{\Lag}{\Theta}

\newcommand{\brbfun}{F}
\newcommand{\brborigfun}{f}
\newcommand{\brbinitfun}{\hat f}
\newcommand{\brbeps}{\epsilon}
\newcommand{\brbBset}{\mathbb{B}}
\newcommand{\brbB}{\mathcal{B}}
\newcommand{\brbBox}[2]{\left[{#1};~{#2}\right]}
\newcommand{\brbBoxSmall}[2]{[{#1};~{#2}]}
\newcommand{\brbUsymb}{U}
\newcommand{\brbLsymb}{A}
\newcommand{\brbU}[1][\brbBox{\brba}{\brbb}]{\brbUsymb(#1)}
\newcommand{\brbL}[1][\brbBox{\brba}{\brbb}]{\brbLsymb(#1)}
\newcommand{\brbxsymb}{x}
\newcommand{\brbysymb}{y}
\newcommand{\brbasymb}{a}
\newcommand{\brbbsymb}{b}
\newcommand{\brbpsymb}{p}
\newcommand{\brbx}{\mb\brbxsymb}
\newcommand{\brbxk}[1][k]{\brbxsymb_{#1}}
\newcommand{\brby}{\mb\brbysymb}
\newcommand{\brbyk}[1][k]{\brbysymb_{#1}}
\newcommand{\brba}{\mb\brbasymb}

\newcommand{\hatbrbak}[1][k]{{\hat\brbasymb}_{#1}}
\newcommand{\brbb}{\mb\brbbsymb}

\newcommand{\hatbrbbk}[1][k]{{\hat\brbbsymb}_{#1}}
\newcommand{\brbp}{\mb\brbpsymb}
\newcommand{\brbpk}[1][k]{\brbpsymb_{#1}}

\newcommand{\cpeps}{\epsilon_\text{CP}}

\newcommand{\noise}{\mcg h}
\newcommand{\pow}{p}

\theoremstyle{remark}
\newtheorem{theorem}{Theorem}
\newtheorem{lemma}{Lemma}

\hyphenation{Tech-ni-sche}

\begin{document}

\title{Improper Signaling versus Time-Sharing in the Two-User Gaussian Interference Channel with TIN}

\author{Christoph~Hellings,~\IEEEmembership{Member,~IEEE}, 
        and~Wolfgang~Utschick,~\IEEEmembership{Senior~Member,~IEEE}
\thanks{The authors are with Technische Universit\"at M\"unchen, Professur f\"ur Methoden der Signalverarbeitung, 80290 M\"unchen, Germany, 
Telephone: +49 89 289-28520, e-mail: hellings@tum.de, utschick@tum.de. C. Hellings is now with Department of Physics, ETH Zurich, 8093 Zurich, Switzerland.

This paper was presented in part (algorithmic aspects and numerical results) at the 22nd International ITG Workshop on Smart Antennas (WSA 2018) \cite{HeUt18}. The main result (optimality of proper signaling) is a novel contribution of this paper.

Copyright (c) 2020 IEEE. Personal use of this material is permitted.  However, permission to use this material for any other purposes must be obtained from the IEEE by sending a request to pubs-permissions@ieee.org.}}

\maketitle

\begin{abstract}
So-called improper complex signals have been shown to be beneficial in the single-antenna two-user Gaussian interference channel 
under the assumptions that all input signals are Gaussian and that we treat interference as noise (TIN).
This result has been obtained under a restriction to pure strategies without time-sharing,
and it was extended to the case where the rates, but not the transmit powers, may be averaged over several transmit strategies.
In this paper, we drop such restrictions and discuss the most general case of coded time-sharing, where both the rates and the powers may be averaged.
Since coded time-sharing can in general not be expressed by means of a convex hull of the rate region,
we have to account for the possibility of time-sharing already during the optimization of the transmit strategy.
By means of a novel channel enhancement argument, we prove a surprising result:
proper signals are optimal if coded time-sharing is allowed.
In addition to establishing this result, we present an algorithm to compute the corresponding achievable rate region.
\end{abstract}

\begin{IEEEkeywords}
Improper signaling, interference channel, rate region, time-sharing, treat interference as noise.
\end{IEEEkeywords}

\section{Introduction}
\label{sec:intro}
While proper Gaussian signals are the optimal input signals in single-user systems with Gaussian noise,
improper input signals can be necessary to exploit the full potential of multiuser systems with interference.
The term \emph{proper} means in this context that the so-called pseudovariance $\mc{\rtilde c}_{\mc x}=\Expect{(\mc x-\Expect{\mc x})^2}$ 
of a complex random variable $\mc x$ is zero \cite{NeMa93}.\footnote{In the case of zero-mean Gaussian random variables, propriety is equivalent to circular symmetry of the probability density function.}

For the three-user interference channel, it was shown in \cite{CaJaWa10} that the optimal degrees of freedom (DoF)
can in general only be achieved using improper transmit signals, i.e., using signals with nonzero pseudovariance.
This result was based on interference alignment and is thus specific to systems with three or more users.
However, it has inspired researchers to also consider the use of improper signals in two-user interference channels.

In \cite{HoJo12}, the Gaussian two-user single-antenna interference channel was studied from a game-theoretic perspective.
Under the assumption that Gaussian codebooks are used and that the receivers treat interference as noise (TIN),
it was shown that a cooperative solution based on improper signaling can outperform the Nash equilibrium obtained with proper input signals.
Moreover, a parametrization of the Pareto boundary of the achievable rate region was given for the special case of maximally improper signals
(corresponding to rank-one beamforming in an equivalent real-valued system).
An algorithm for signal-to-interference-and-noise ratio (SINR) balancing with maximally improper signals and zero-forcing was developed in \cite{PaPaKiLe13},
and \cite{ZeYeGuGuZh13} proposed two algorithms for suboptimal rate balancing with general improper signals based on 
semidefinite relaxation and based on a two-stage method that optimizes the transmit power and the impropriety of each transmit signal in two separate steps.
As extensions to multiantenna systems, \cite{ZeZhGuGu13} proposed a method for rate balancing in the multiple-input single-output (MISO) interference channel
based on a similar two-stage approach,
and \cite{LaAgVi16} proposed to obtain suboptimal solutions 
to a weighted sum rate maximization 
in the multiple-input multiple-output (MIMO) interference channel
via a weighted MSE formulation based on \cite{ChAgCaCi08}.

The optimization of improper signaling strategies in the interference channel is a nonconvex problem,
and the approaches presented above do not guarantee globally optimal solutions.
However, for the case of a single-antenna system, Pareto-optimal transmit strategies for the case of proper signaling can be computed in a globally optimal manner 
(see, e.g., \cite{ZeYeGuGuZh13}).
By observing that the rate region obtained with suboptimal improper strategies can be larger than the one for globally optimal proper signaling,
it was concluded in \cite{ZeYeGuGuZh13} that proper signaling is not always the optimal strategy in the two-user interference channel
(under the assumption of Gaussian codebooks and TIN).
While this result was obtained for a single channel realization,
simulations in \cite{KiYeCh13} revealed that a similar behavior occurs for a large range of channel realizations.

However, for all these simulations, an important restriction was assumed, namely that so-called \emph{time-sharing},
i.e., averaging data rates and transmit powers over several transmit strategies (see Section~\ref{sec:TS}), is not allowed.
We refer to strategies that obey this restriction as \emph{pure strategies}.
After comparing the rate regions obtained with such pure strategies and observing gains due to improper signaling,
the authors of \cite{HoJo12} and of \cite{ZeYeGuGuZh13} took the convex hull of each rate region in order to account for the possibility of averaging the data rates.
After this operation, they still observed gains by improper signaling.
However, this does not answer the question whether improper signaling can still be necessary for optimal performance
if coded time-sharing \cite{GaKi11} is allowed, namely if both the data rates and the transmit powers can be averaged.

It is well known that coded time-sharing can in general achieve larger rate regions than a convex hull formulation (e.g., \cite{HaKo81,MoKh09}).
However, the question we consider here is a different one:
when using TIN strategies with coded time-sharing in the two-user interference channel, can improper Gaussian inputs perform better than proper Gaussian inputs?
In \cite{HeUt18}, it was observed in numerical simulations that proper signaling with (coded) time-sharing leads to larger rate regions
than the regions obtained with the convex hull operation in \cite{HoJo12,ZeYeGuGuZh13}.
However the question whether or not improper signaling with time-sharing can outperform proper signaling with time-sharing remained open.
In this paper, we settle this problem by showing that proper signaling is indeed the optimal choice if coded time-sharing is allowed---a conclusion that is completely different from the case without time-sharing.

As a further contribution, we show that symbol extensions (considering multiple subsequent channel uses as a single channel use in a higher-dimensional system, e.g., \cite{JaSh08,CaJa08,CaJaWa10})
cannot enlarge the considered time-sharing rate region, which is a generalization of a result from \cite{BeLiNaYa16} to complex scenarios.
Moreover, we discuss an algorithm to numerically compute the time-sharing rate region (Sections~\ref{sec:algo} and~\ref{sec:inner})
and revisit a numerical example from the conference contribution \cite{HeUt18} with additional interpretations (Sections~\ref{sec:num} and~\ref{sec:conclusion}).

The scenario for which we obtain these results is the two-user interference channel under the assumptions of Gaussian input signals and TIN, and there are several motivations for this combination of assumptions.
First, our aim is to clarify upon several existing publications on proper and improper signals in the two-user interference channel (e.g., \cite{HoJo12,ZeYeGuGuZh13,KiYeCh13}),
which made exactly this combination of assumptions.

Second, even though it is well known that simultaneous decoding \cite{GaKi11} can achieve better performance in general,
and even though non-Gaussian codebooks might be beneficial in the interference channel \cite{WuShVe11,WuShVe15},
TIN strategies with Gaussian inputs are among the most prominent transmit strategies studied in the literature, and it is thus interesting to understand their performance limitations.
In the case of a complex setting, this includes understanding whether or not the possibility of improper signals needs to be taken into consideration when assessing the performance limits of TIN in the two-user interference channel.

Third, proofs for the real-valued interference channel 
do not directly generalize to complex scenarios if they explicitly make use of the expression for the differential entropy of the real-valued Gaussian distribution
(e.g., the proof that using colored Gaussians over several letters is not beneficial \cite{BeLiNaYa16}
or the derivation of interference regimes in which time-sharing over TIN strategies matches the Han-Kobayashi rate region \cite{CoNa12})
since structurally equivalent entropy expressions are obtained in the complex case only under a restriction to proper signals.\footnote{The differential entropy of an improper complex random variable $\mc x$
with variance $\mc c_{\mc x}$ and pseudovariance $\mc {\rtilde c}_{\mc x}$ is given by 
$\Entr{\mc x} = \log_2 (\pi \E \mc c_{\mc x}) + \frac{1}{2}\log_2\left(1-\frac{|\mc {\rtilde c}_{\mc x}|^2}{\mc c_{\mc x}^2}\right)$ \cite{ScSc10}, i.e., it contains a second summand that is not present in the expression
for the differential entropy of a real-valued Gaussian random variable. This entails structural differences compared to the real-valued case also in the rate equations, see \eqref{eq:rk}.}
Knowing about the optimality of proper signals can enable us to transfer such results for Gaussian signals with TIN to complex scenarios by exploiting the structural similarities of the real-valued and the proper complex Gaussian entropy.

Finally, the study of TIN strategies has to be considered as a starting point, and the proof technique developed in this paper might be helpful to study scenarios with more complicated rate expressions in the future.
For instance, it is also not obvious whether or not improper Gaussian signals can improve upon proper Gaussian signals in the Han-Kobayashi coding scheme \cite{HaKo81} in the complex interference channel,
and previous work has mainly focused on studying the Han-Kobayashi scheme in real-valued scenarios.

The research presented in this paper was inspired by previous studies in the one-sided interference channel with Gaussian inputs and TIN.
For this setting, explicit characterizations of globally optimal pure strategies with improper signaling were found for the sum rate maximization problem in \cite{KuSu15}
and for the whole Pareto boundary of the rate region in \cite{LaSaSc17}.
Based on these characterizations, it was concluded that improper signaling leads to a higher sum rate and an enlarged rate region when compared to proper signaling,
and this result remained true when taking the convex hulls of the respective rate regions.
However, it was then proven analytically in \cite{HeUt17a} that improper signaling no longer brings an advantage over proper signaling if time-sharing is allowed.

The proof technique from \cite{HeUt17a} is based on the fact that the 
one-sided interference channel can be transformed to a standard form with real-valued channel coefficients,
which is not possible for the general two-user interference channel, where both users mutually disturb each other.
Therefore, the proof technique from the one-sided interference channel can unfortunately not be directly transferred.
In this paper, we overcome this problem by the novel idea of introducing an enhanced interference channel with real-valued channel coefficients (Section~\ref{sec:main}).

\emph{Notation:} 
We use $\zero$ for the zero vector, $\one$ for the all-ones vector, and $\bullet^\Tr$ for the transpose.
Inequalities for vectors have to be understood as sets of component-wise inequalities.
The matrix $\id_N$ is the $N\times N$ identity matrix. The ceiling operation $\lceil a\rceil$ rounds a real number $a$ to the next integer greater than or equal to $a$.
We use $\Re$, $\Im$, and $\angle$ for the real part, imaginary part, and the argument of a complex number, respectively.
Complex quantities are written in sans-serif font.

\section{System Model and Time-Sharing}
\label{sec:TS}
We consider a two-user interference channel 
\begin{subequations}
\label{eq:model}
\begin{align}
\label{eq:model1}
\mc y_1 &= \mc h_{11} \mc x_1 + \mc h_{12} \mc x_2 + \noise_1 \\
\label{eq:model2}
\mc y_2 &= \mc h_{21} \mc x_1 + \mc h_{22} \mc x_2 + \noise_2
\end{align}
\end{subequations}
with proper Gaussian noise $\noise_k\sim\mathcal{CN}(0,c_{\noise_k})$, where
the input signals $\mc x_k,~k=1,2$ are (possibly improper) zero-mean complex Gaussian with variance $c_{\mc x_k}=\Expect{|\mc x_k|^2}$.
The two input signals and the noise at both users (i.e., $\mc x_1$, $\mc x_2$, $\noise_1$, and $\noise_2$) are assumed to be mutually independent.
The channel coefficients and noise variances are assumed to be known and to be constant over time.

\subsection{Pure Strategies}
When applying one of the optimization methods for improper signaling from \cite{HoJo12,PaPaKiLe13,ZeYeGuGuZh13,ZeZhGuGu13,LaAgVi16,KiYeCh13,KuSu15,LaSaSc17},
the result is a single transmit strategy that is applied as long as the channel realization remains the same.
In this paper, we refer to this kind of transmit strategies as \emph{pure strategies}.

For the system under consideration, the achievable rates (Shannon rates) of the two users 
in case of a pure strategy can be expressed as (e.g., \cite{ZeYeGuGuZh13})
\begin{equation}
\label{eq:rk}
r_k(\mathcal X) = \log_2\left(\frac{c_{\mc y_k} }{c_{\mc s_k}  }\right)+\frac{1}{2}\log_2\left(
\frac{1- c_{\mc y_k}^{-2} |\mc{\rtilde c}_{\mc y_k}|^2}
{1- c_{\mc s_k}^{-2} |\mc{\rtilde c}_{\mc s_k}|^2}
\right)
\end{equation}
with 
\begin{subequations}
\begin{align}
c_{\mc y_k} &= |\mc h_{kk}|^2 c_{\mc x_k}  + c_{\mc s_k}, &
c_{\mc s_k} &=|\mc h_{kj}|^2 c_{\mc x_j}  +  c_{\noise_k}, \\
\mc{\rtilde c}_{\mc y_k} &=  \mc h_{kk}^2 \mc{\rtilde c}_{\mc x_k}  + \mc{\rtilde c}_{\mc s_k}, &
\mc{\rtilde c}_{\mc s_k} &=  \mc h_{kj}^2 \mc{\rtilde c}_{\mc x_j}
\end{align}
\end{subequations}
and $j=3-k$.
The signal $\mc s_k$, whose variance and pseudovariance is given above, can be interpreted as the interference-plus-noise at receiver $k$. 
We use $\mathcal X$ to summarize all parameters that describe the chosen strategy, i.e.,
$\mathcal X=(c_{\mc x_1},c_{\mc x_2},\mc{\rtilde c}_{\mc x_1},\mc{\rtilde c}_{\mc x_2})$ is the tuple of all transmit variances and pseudovariances.
In the special case of a strategy with proper signaling, both pseudovariances are zero (i.e., $\mc{\rtilde c}_{\mc x_1}=\mc{\rtilde c}_{\mc x_2}=0$), and the second summand in \eqref{eq:rk} vanishes.

To study the rate region that is achievable with pure strategies, we have to find Pareto-optimal pairs of achievable rates $(r_1,r_2)$.
This can be done by solving the optimization 
\begin{subequations}
\label{eq:noTS}
\begin{align}
\max_{\mathcal X,R\in\mathbb{R}} ~~R  \quad\st\quad & r_k(\mathcal X) \geq \rho_k R,~~\allk \label{eq:noTS:rate} \\
& 0\leq c_{\mc x_k}\leq P_k, ~~\allk \label{eq:noTS:pow} \\
&  |\mc{\rtilde c}_{\mc x_k}| \leq c_{\mc x_k},~~\allk \label{eq:noTS:pos}
\end{align}
\end{subequations}
where $\mb{\rho}=[\rho_1,\rho_2]^\Tr=[\beta,1-\beta]^\Tr$ for various $\beta\in[0;1]$.
This kind of optimization is called \emph{rate balancing} \cite{JoBo02}
and the vector $\mb\rho$ is sometimes referred to as \emph{rate profile} vector \cite{MoZhCi06}.
Its entries $\rho_k$ define relative rate targets of the two users, and the optimal value of $R$ is the highest possible common scaling factor that still leads to a feasible pair of rates.
If $\rho_1+\rho_2=1$, the value of $R$ equals the sum rate that is achieved by the obtained strategy.
Without loss of generality, we assume that $\mb\rho$ is chosen in this manner.

Due to the second constraint \eqref{eq:noTS:pow}, it is ensured that the average transmit power of user $k$ is nonnegative and does not exceed $P_k$.
The last constraint \eqref{eq:noTS:pos} is a requirement that has to be fulfilled by any valid combination of variance and pseudovariance (see, e.g., \cite{ScSc10}).

\subsection{Time-Sharing}
The alternative to pure strategies is that an algorithm for transceiver design delivers multiple transmit strategies
along with weighting factors $\tau_\ms$ that indicate
which fraction of the total time the $\ms$th strategy should be employed.
This concept is generally referred to as \emph{time-sharing} or \emph{coded time-sharing} (e.g., \cite{GaKi11}).
The rate balancing optimization with time-sharing can be formulated as 
\begin{subequations}
\label{eq:primal}
\begin{align}
\label{eq:primal:rate}
\max_{\substack{\mathcal X^{(\ms)},\Ms\in\mathbb{N},R\in\mathbb{R}\\\mb\tau\geq\zero: \one^\Tr\mb\tau=1}} ~~R  \quad\st\quad 
& \sum_{\ms=1}^\Ms \tau_\ms r_k(\mathcal X^{(\ms)}) \geq \rho_k R,~~\allk \\
& \sum_{\ms=1}^\Ms \tau_\ms c_{\mc x_k}^{(\ms)}\leq P_k, ~~\allk \label{eq:primal:pow}\\
& 0\leq c_{\mc x_k}^{(\ms)}, ~~\allk,~\forall\ms \label{eq:primal:pos}\\
&  |\mc{\rtilde c}_{\mc x_k}^{(\ms)}| \leq c_{\mc x_k}^{(\ms)},~~\allk, ~\forall \ms 
\label{eq:primal:pcov}
\end{align}
\end{subequations}
where $\mathcal X^{(\ms)}=(c_{\mc x_1}^{(\ms)},c_{\mc x_2}^{(\ms)},\mc{\rtilde c}_{\mc x_1}^{(\ms)},\mc{\rtilde c}_{\mc x_2}^{(\ms)})$
are the transmit (pseudo)variances employed in the $\ms$th strategy
and $\mb{\tau}=[\tau_1, \dots, \tau_\Ms]$ is the vector of time-sharing weights.
If we interpret time-sharing as subsequent application of $\Ms$ strategies, these weights correspond to the lengths of the time intervals.
In an alternative interpretation from \cite{HaKo81},
a time-sharing parameter $Q\in\{1,\dots,\Ms\}$ randomly decides which strategy is employed \cite{HaKo81},
and $\tau_\ms$ specifies the probability that the strategy $Q=\ms$ is chosen.

\subsection{Convex Hull}
Many researchers have accounted for the possibility of time-sharing by 
first deriving a method for optimizing pure strategies and then
taking the convex hull of the rate region obtained with pure strategies (e.g., \cite{HoJo12,ZeYeGuGuZh13,ZeZhGuGu13,LaSaSc17}).
However, it was pointed out in \cite{HaKo81} that this does not exploit the full potential of time-sharing.

The reason for this is that taking the convex hull of the rate region can be interpreted as averaging the achievable data rates over several operation points
while respecting the power constraints individually in each operation point, i.e., 
\begin{subequations}
\label{eq:rts}
\begin{align}
\label{eq:rts:rate}
\max_{\substack{\mathcal X^{(\ms)},\Ms\in\mathbb{N},R\in\mathbb{R}\\\mb\tau\geq\zero: \one^\Tr\mb\tau=1}} ~~R  \quad\st\quad 
& \sum_{\ms=1}^\Ms \tau_\ms r_k(\mathcal X^{(\ms)}) \geq \rho_k R,~~\allk \\
& c_{\mc x_k}^{(\ms)}\leq P_k, ~~\allk,~\forall\ms \label{eq:rts:pow}\\
& 0\leq c_{\mc x_k}^{(\ms)}, ~~\allk,~\forall\ms \label{eq:rts:pos}\\
&  |\mc{\rtilde c}_{\mc x_k}^{(\ms)}| \leq c_{\mc x_k}^{(\ms)},~~\allk, ~\forall \ms.
\label{eq:rts_pcov}
\end{align}
\end{subequations}
However, for coded time-sharing as discussed, e.g., in \cite{HaKo81}, these constraints are only required to be fulfilled on average.
Consequently, the convex hull formulation is more restrictive,
which can be seen by comparing \eqref{eq:primal:pow} to \eqref{eq:rts:pow},
and might thus not achieve the complete time-sharing rate region.

\subsection{Comparison}
If the aim is to optimize a time-sharing strategy, we cannot first optimize pure strategies for fixed transmit power limitations and account for time-sharing afterwards.
Due to the possibility of averaging the transmit powers, it is not clear a~priori under which transmit power constraints the pure strategies should be optimized.
Instead, we have to account for the possibility of time-sharing already in the optimization procedure.

At first glance, one might get the impression that time-sharing leads to undesirable fluctuations of the transmit powers since
the power constraints are fulfilled only on average.
To understand whether or not this is an issue, we need to recall that even the power constraint \eqref{eq:noTS:pow}
in the case of pure strategies is an average power constraint
since it restricts the expected value of the squared transmit signal and not the peak value.
As Gaussian codebooks are assumed in all the abovementioned studies based on Shannon rates, the fluctuations of the instantaneous transmit powers can thus be significant
even in case of pure strategies.
This result carries over to the case where the convex hull is taken.\footnote{Accordingly, 
the convex hull formulation was called \emph{time-sharing under short-term average power constraints} in \cite{HeUt17a} while coded time-sharing was called \emph{time-sharing under long-term average power constraints}.}

The relaxed power constraints in the time-sharing formulation are indeed an additional source for fluctuations of the transmit powers.
However, there is no need that the various strategies $\ms=1,\dots,\Ms$ are applied one after another.
In the coded time-sharing formulation based on a time-sharing parameter $Q$, 
it is randomly decided on a per-symbol basis which of the strategies is applied (see \cite{HaKo81}).
From a technical perspective, this could be approximated by interleaving 
transmit symbols belonging to different strategies in a fixed pseudorandom ordering.
If we take this perspective on time-sharing, the fluctuations of the transmit powers are short-term fluctuations,
just like in the case of pure strategies or in case of the convex hull formulation.

\subsection{Remark on Symbol Extensions}
All rate expressions given above can be extended to include the possibility of symbol extensions (see, e.g., \cite{JaSh08,CaJa08,CaJaWa10,BeLiNaYa16,NaNg19}).
However, we have omitted this possibility since we show below that symbol extensions do not bring any advantage in the considered scenario.
Indeed, several publications have previously studied potential gains by symbol extensions in the real-valued interference channel \cite{ChVe93,BeLiNaYa16,NaNg19},
and it was shown that symbol extensions do not bring an advantage in the case of Gaussian signals and TIN with coded time-sharing \cite{BeLiNaYa16}.
However, these existing publications have not studied the complex interference channel, where the possibilities of symbol extensions and improper signals need to be considered jointly.
Thus, one part of proving Theorem~\ref{th:main} in the next section is to extend this result to complex settings.
For the sake of a clear presentation of the main ideas, the main part of the paper gives intuitive justifications without considering symbol extensions
while formal proofs including the aspect of symbol extensions are deferred to the appendix.

\section{Main Result}
\label{sec:main}
For the rate region with pure strategies as well as for its convex hull,
is has been observed that improper signaling can lead to a larger region than proper signaling (see the summary in Section~\ref{sec:intro}).
The following theorem shows that this result changes when considering the rate region with coded time-sharing.

\begin{theorem}
\label{th:main}
Consider the two-user Gaussian interference channel \eqref{eq:model}
with Gaussian input signals under power constraints \eqref{eq:primal:pow}, and assume that interference is treated as noise.
Then, the whole time-sharing rate region $\mathcal{R}$ can be achieved using proper input signals without symbol extensions.
\end{theorem}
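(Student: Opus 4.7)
My plan is to follow the hint of a channel-enhancement argument: given an arbitrary coded time-sharing strategy (potentially improper, potentially with symbol extensions) that achieves a Pareto point of $\mathcal R$ on the original complex IC~\eqref{eq:model} subject to the long-term power constraints~\eqref{eq:primal:pow}, I would construct a coded time-sharing strategy on an \emph{enhanced} IC whose channel coefficients are real-valued and whose TIN rate region contains the given operation point. On that enhanced IC, the problem essentially reduces to a real-valued IC, where an adaptation of~\cite{BeLiNaYa16} rules out both improper signaling and symbol extensions. Transferring the resulting proper strategy back to the original IC then yields the theorem.

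For the enhancement step, I would exploit two symmetries of the rate formulas~\eqref{eq:rk}: rotating $\mc x_k$ by a phase $\E^{\J\phi_k}$ leaves $c_{\mc x_k}$ and hence the power budget invariant but multiplies $\mc{\rtilde c}_{\mc x_k}$ by $\E^{2\J\phi_k}$ and $\mc h_{jk}$ by $\E^{\J\phi_k}$; and complex conjugating a whole strategy (signals, pseudovariances, cross-channel coefficients) preserves the rates in~\eqref{eq:rk}. For every operation point $\mathcal X^{(\ms)}$ of the given strategy I would add suitably rotated and conjugated companion points to the time-sharing ensemble. Averaging over these copies, I expect the complex phases of the cross coefficients $\mc h_{kj}$ to wash out: each user's power is unchanged, each direct-link contribution $\mc h_{kk}^2\mc{\rtilde c}_{\mc x_k}^{(\ms)}$ is matched by an aligned contribution in the enhanced channel, and the joint pseudovariance contribution at each receiver can be reproduced in the channel whose coefficients are $|\mc h_{kj}|$ by a corresponding choice of the pseudovariances in the individual time-sharing slots. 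The result is a coded time-sharing strategy on a real-coefficient IC that achieves the same rate pair under the same long-term power constraints.

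On the real-coefficient IC, the real and imaginary parts of the Gaussian signals, noises, and outputs decouple into two identical parallel real-valued two-user ICs. Applying the real-IC result of~\cite{BeLiNaYa16} to the combined long-term-power version of this parallel system shows that the rate pair is achieved by a coded time-sharing strategy that uses only independent real Gaussians in the two real dimensions, i.e.\ proper complex Gaussians, and that no symbol extension is required. Finally, since the rate expressions for proper signals depend on the channel only through $|\mc h_{ij}|^2$, this proper strategy attains the same rate pair on the original IC~\eqref{eq:model}, establishing Theorem~\ref{th:main}.

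The main obstacle I foresee is the enhancement step: the reduction to real channel coefficients must preserve \emph{both} users' rates \emph{simultaneously}, whereas per-slot phase alignment can in general only align interference at one receiver at a time. The trick will therefore lie in choosing the phase-rotated and conjugated companion points so that the averaged pseudovariance contributions match the enhanced channel at \emph{both} receivers, while only long-term averages of powers are available for this bookkeeping. Extending the same construction to multi-letter strategies, so that symbol extensions can be absorbed uniformly rather than treated by an independent argument, is the second delicate point; here I would work with the vector/matrix generalizations of the covariance and pseudocovariance identities behind~\eqref{eq:rk}, consistent with the paper's remark on symbol extensions.
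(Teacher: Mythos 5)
Your overall skeleton coincides with the paper's: reduce to an enhanced interference channel with real-valued coefficients, invoke the real-valued result of \cite{BeLiNaYa16} to dispose of symbol extensions and of power imbalance between the two real dimensions, and transfer the resulting proper strategy back to \eqref{eq:model} using the fact that proper rates depend on the channel only through $|\mc h_{ij}|^2$. The second and third of these steps are essentially Lemmas~\ref{lem:proper_opt} and~\ref{lem:proper_equal} and are sound. The genuine gap is in the first step, which you yourself flag as the ``main obstacle'' without resolving, and the mechanism you propose for it does not work. Coded time-sharing averages the \emph{rates} $\sum_\ms \tau_\ms r_k(\mathcal X^{(\ms)})$, not the pseudovariances, and $r_k$ is a nonlinear function of $\mc{\rtilde c}_{\mc x_1},\mc{\rtilde c}_{\mc x_2}$; adding phase-rotated or conjugated companion slots therefore does not make the cross-channel phases ``wash out'' --- it merely averages the unchanged per-slot rates. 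Moreover, conjugating the transmit pseudovariances produces $\mc h_{kk}^2\overline{\mc{\rtilde c}_{\mc x_k}}+\mc h_{kj}^2\overline{\mc{\rtilde c}_{\mc x_j}}$ at receiver $k$, which is not the conjugate of $\mc{\rtilde c}_{\mc y_k}$ unless the channel is already real, so the conjugation symmetry you invoke is not a symmetry of the original channel. Per-slot phase rotation gives only two free phases $\varphi_1,\varphi_2$ to meet two alignment conditions, which are simultaneously solvable only if $\angle(\mc h_{12}\mc h_{11}^{-1})=-\angle(\mc h_{21}\mc h_{22}^{-1})$ --- precisely the condition that fails for generic channels and that you identify as the obstruction.

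The missing idea is that one should not try to \emph{achieve} alignment in the original channel at all, but should prove the containment $\mathcal R\subseteq\bar{\mathcal R}$ by an upper bound: since $r_k$ is non-increasing in $|\mc{\rtilde c}_{\mc y_k}|^2$ and the reverse triangle inequality gives $|\mc{\rtilde c}_{\mc y_k}|^2\ge \bigl|\,|\mc h_{kk}|^2\kappa_{\mc x_k}-|\mc h_{kj}|^2\kappa_{\mc x_j}\bigr|^2$ for \emph{arbitrary} phases, every strategy in \eqref{eq:model} has its per-slot rate bounded above by a phase-independent expression $\bar r_k$, and this bound is simultaneously achievable for both users in the enhanced channel \eqref{eq:modelUB}, where the alignment condition does hold. (For symbol extensions the paper runs the same argument in the $2T\times 2T$ composite real domain via the Hadamard inequality and an optimal eigenvalue pairing.) Without an argument of this type, your enhancement step, and hence the proof, does not go through.
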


The proof is established by combining three Lemmas that are stated and proven below.
Lemma~\ref{lem:enhanced} introduces an enhanced interference channel whose rate region $\bar{\mathcal{R}}$ contains the whole original rate region $\mathcal{R}$.
Lemma~\ref{lem:proper_equal} shows that the rate regions of both systems under a restriction to proper signaling without symbol extensions, i.e., $\mathcal{R}_\mt{proper}$ and $\bar{\mathcal{R}}_\mt{proper}$, coincide.
Finally, Lemma~\ref{lem:proper_opt} shows that proper signaling without symbol extensions achieves the whole rate region of the enhanced system, 
i.e., $\bar{\mathcal{R}}_\mt{proper}$ is the same as $\bar{\mathcal{R}}$.
\begin{proof}[Proof of Theorem~\ref{th:main}]
By Lemmas~\ref{lem:enhanced}, \ref{lem:proper_equal}, and~\ref{lem:proper_opt}, we have
$\mathcal{R}\subseteq\bar{\mathcal{R}}=\bar{\mathcal{R}}_\mt{proper}=\mathcal{R}_\mt{proper}$,
where $\bar{\mathcal{R}}$ is defined in Lemma~\ref{lem:enhanced}, and 
the subscript ${}_\mt{proper}$ denotes the respective rate region under a restriction to proper input signals without symbol extensions.
On the other hand, it is clear that $\mathcal{R}_\mt{proper}\subseteq\mathcal{R}$.
This shows that $\mathcal{R}_\mt{proper}=\mathcal{R}$.
\end{proof}

\begin{lemma}
\label{lem:enhanced}
Let $\bar{\mathcal R}$ denote the time-sharing rate region of the modified interference channel
\begin{subequations}
\label{eq:modelUB}
\begin{align}
\label{eq:modelUB1}
\mc y_1 &= |\mc h_{11}| \mc x_1 + |\mc h_{12}| \mc x_2 + \noise_1 \\
\label{eq:modelUB2}
\mc y_2 &= |\mc h_{21}| \mc x_1 + |\mc h_{22}| \mc x_2 + \noise_2
\end{align}
\end{subequations}
under the same assumptions as in Theorem~\ref{th:main}.
Then, $\mathcal{R}\subseteq\bar{\mathcal R}$.
\end{lemma}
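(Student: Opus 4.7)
The plan is to construct, for every achievable operating point of the original interference channel~\eqref{eq:model}, a strategy of the enhanced channel~\eqref{eq:modelUB} that uses the same average transmit powers and attains at least the same pair of rates. Since the variances $c_{\mc y_k}$ and $c_{\mc s_k}$ in~\eqref{eq:rk} depend on the channel only through the magnitudes $|\mc h_{kj}|^2$, these quantities automatically coincide in the two channels as soon as the transmit variances $c_{\mc x_k}$ agree. Consequently, only the pseudovariances require attention, which is the one ingredient that actually distinguishes the two models.

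I would first handle a pure strategy. Given $(c_{\mc x_1}, c_{\mc x_2}, \mc{\rtilde c}_{\mc x_1}, \mc{\rtilde c}_{\mc x_2})$ in the original channel, I map it to the enhanced-channel strategy with the same variances and with real-valued pseudovariances $\mc{\rtilde c}_{\mc x_1}' = |\mc{\rtilde c}_{\mc x_1}|$ and $\mc{\rtilde c}_{\mc x_2}' = -|\mc{\rtilde c}_{\mc x_2}|$. The validity condition $|\mc{\rtilde c}_{\mc x_k}'| \leq c_{\mc x_k}$ is inherited, and $|\mc{\rtilde c}_{\mc s_k}'| = |\mc h_{kj}|^2 |\mc{\rtilde c}_{\mc x_j}| = |\mc{\rtilde c}_{\mc s_k}|$, so the denominator inside the second logarithm of~\eqref{eq:rk} is unchanged. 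The key observation concerns $|\mc{\rtilde c}_{\mc y_k}'|$: because $\mc{\rtilde c}_{\mc x_1}'$ and $\mc{\rtilde c}_{\mc x_2}'$ have opposite signs and all channel coefficients are nonnegative real, a direct expansion yields
\begin{equation*}
|\mc{\rtilde c}_{\mc y_k}'|^2 = |\mc h_{kk}|^4 |\mc{\rtilde c}_{\mc x_k}|^2 + |\mc h_{kj}|^4 |\mc{\rtilde c}_{\mc x_j}|^2 - 2|\mc h_{kk}|^2 |\mc h_{kj}|^2 |\mc{\rtilde c}_{\mc x_k}||\mc{\rtilde c}_{\mc x_j}|
\end{equation*}
simultaneously for $k=1$ and $k=2$, whereas the analogous expression for the original channel has the same first two terms plus $2|\mc h_{kk}|^2|\mc h_{kj}|^2 |\mc{\rtilde c}_{\mc x_k}||\mc{\rtilde c}_{\mc x_j}| \cos\theta_k$ with $\cos\theta_k\in[-1,1]$. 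Hence $|\mc{\rtilde c}_{\mc y_k}'| \leq |\mc{\rtilde c}_{\mc y_k}|$ for both $k$, and since~\eqref{eq:rk} is monotonically decreasing in $|\mc{\rtilde c}_{\mc y_k}|$, the rates satisfy $r_k \leq r_k'$.

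The extension to coded time-sharing is then immediate: I apply the mapping componentwise to every strategy $\mathcal X^{(\ms)}$ and keep the weights $\tau_\ms$ unchanged, so that the power and validity constraints \eqref{eq:primal:pow}--\eqref{eq:primal:pcov} are preserved term by term and the rate inequalities~\eqref{eq:primal:rate} can only become slacker. The main obstacle I anticipate is less the bound itself than justifying that a \emph{single} sign assignment can be optimal at both receivers simultaneously: in the enhanced channel, the two cross-term phases collapse into the single quantity $\operatorname{sgn}(\mc{\rtilde c}_{\mc x_1}')\operatorname{sgn}(\mc{\rtilde c}_{\mc x_2}')$, whereas in the original channel the two receivers see the potentially incompatible offsets $2\angle \mc h_{11}-2\angle \mc h_{12}$ and $2\angle \mc h_{22}-2\angle \mc h_{21}$. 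The saving feature is that the \emph{worst-case} cosine at both receivers is attained at the same sign configuration, which is precisely what makes the construction work. For a rigorous statement that also allows symbol extensions (deferred to the appendix in the paper's outline), the same sign flip is applied on a per-letter basis in the extended block.
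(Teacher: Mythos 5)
Your construction is, in substance, the paper's own argument for the case without symbol extensions. Choosing same-variance inputs with opposite-sign real pseudovariances is exactly the phase configuration $\varphi_k=\pi+\varphi_j$ that attains the reverse-triangle-inequality lower bound on $|\mc{\rtilde c}_{\mc y_k}|^2$, and your ``saving feature'' is precisely the paper's observation that this configuration is simultaneously feasible at both receivers because $\angle(\mc h_{12}\mc h_{11}^{-1})=-\angle(\mc h_{21}\mc h_{22}^{-1})$ holds in the enhanced channel (all coefficients nonnegative real) but not in general in the original one. The remaining ingredients --- invariance of $c_{\mc y_k}$, $c_{\mc s_k}$ and $|\mc{\rtilde c}_{\mc s_k}|$ under the replacement, monotone decrease of \eqref{eq:rk} in $|\mc{\rtilde c}_{\mc y_k}|^2$ (which relies on $c_{\mc s_k}^2>|\mc{\rtilde c}_{\mc s_k}|^2$), and the term-by-term transfer to coded time-sharing --- all coincide with the paper's justification in Section~\ref{sec:main}.

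The one genuine gap is the symbol-extension case, which the lemma must cover for the proof of Theorem~\ref{th:main} to close (the theorem's conclusion ``without symbol extensions'' presupposes that $\mathcal{R}$ admits them). Over a block of $T$ letters the inputs carry arbitrary $T\times T$ covariance and pseudocovariance matrices, possibly correlated across letters, so there is no canonical ``per-letter sign flip'' to apply. The paper's appendix instead passes to the $2T\times 2T$ composite real representation, upper-bounds the log-det rate by the Hadamard inequality combined with an anti-aligned reordering of the eigenvalues of the transmit covariance against those of the interference-plus-noise covariance, and then shows this phase-independent bound is achievable in the enhanced channel because the rotation factors there satisfy $\mb U_{21}^\bTr \mb U_{22}\mb U_{21}^\bTr \mb U_{11}=\id_{2T}$, so that suitable eigenbases $\mb V_1,\mb V_2$ exist for both receivers at once. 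Your closing sentence does not supply this argument; as written, it only covers block inputs whose composite real covariance is already diagonal in the standard basis.
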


The following intuitive justification focuses on the main novelty in the proof, namely on how to deal with the different phases in a complex setting.
The formal proof of Lemma~\ref{lem:enhanced} including the possibility of symbol extensions is presented separately in the appendix.

In the original interference channel \eqref{eq:model},
let\footnote{The time slot index $\ms$ can be omitted for the sake of brevity whenever we consider only a particular time slot.} $\mc{\rtilde c}_{\mc x_k} = \kappa_{\mc x_k} \E^{\J\varphi_k},~\allk$ with the nonnegative impropriety coefficient $\kappa_{\mc x_k}\geq0$.
We note that $|\mc{\rtilde c}_{\mc s_k}|^2=|\mc h_{kj}|^4 \kappa_{\mc x_j}^2$,
so that the only dependence of $r_k(\mathcal X)$ from \eqref{eq:rk} on $\varphi_k$, $\varphi_j$ and on the phases of the channel coefficients is via $\mc{\rtilde c}_{\mc y_k}$.
Moreover, we have
\begin{align}
|\mc{\rtilde c}_{\mc y_k}|^2 &= |\mc h_{kk}^2 \mc{\rtilde c}_{\mc x_k} +\mc h_{kj}^2 \mc{\rtilde c}_{\mc x_j}|^2 \nonumber
\\&\geq \left| |\mc h_{kk}^2 \mc{\rtilde c}_{\mc x_k}| - |\mc h_{kj}^2 \mc{\rtilde c}_{\mc x_j}| \right|^2 \nonumber
\\&=\left| |\mc h_{kk}|^2 \kappa_{\mc x_k} - |\mc h_{kj}|^2 \kappa_{\mc x_j} \right|^2
\end{align}
with equality if 
\begin{align}
\angle(\mc h_{kk}^2 \mc{\rtilde c}_{\mc x_k}) &= \pi+\angle(\mc h_{kj}^2 \mc{\rtilde c}_{\mc x_j})
\ifCLASSOPTIONdraftcls\else\nonumber\\\fi
~\Leftrightarrow~
\varphi_k \ifCLASSOPTIONdraftcls\else\tmpand\fi= \pi + \varphi_j+ \angle(\mc h_{kj}^2\mc h_{kk}^{-2}).
\end{align}
An upper bound to the rate $r_k(\mathcal X)$ is thus given by
\begin{align}
\label{eq:rkUB}
&{\bar r}_k(\mathcal X) = \log_2\left(\frac{c_{\mc y_k} }{c_{\mc s_k}  }\right)
\ifCLASSOPTIONdraftcls\else\nonumber\\\tmpand\fi
+\frac{1}{2}\log_2\left(
\frac{1- c_{\mc y_k}^{-2} \left| |\mc h_{kk}|^2 \kappa_{\mc x_k} - |\mc h_{kj}|^2 \kappa_{\mc x_j} \right|^2}
{1- c_{\mc s_k}^{-2} |\mc h_{kj}|^4 \kappa_{\mc x_j}^2}
\right).
\end{align}
This upper bound neither depends on $\varphi_k,\varphi_j$ nor on the phases of the channel coefficients.
To see that it is indeed an upper bound, note that the rate expression \eqref{eq:rk} is non-increasing in $|\mc{\rtilde c}_{\mc y_k}|^2$
since the denominator of the second summand in \eqref{eq:rk} is positive due to $c_{\mc s_k}^2 > |\mc{\rtilde c}_{\mc s_k}|^2$ (see \cite[Lemma~3]{ZeYeGuGuZh13}).

As the upper bound ${\bar r}_k(\mathcal X)$ does not depend on the phases of the channel coefficients, it is equal for the original interference channel \eqref{eq:model}
and for the enhanced interference channel \eqref{eq:modelUB}.
A choice of $\varphi_1$ and $\varphi_2$ that achieves equality in \eqref{eq:rkUB} for both users simultaneously exists
if $ \angle(\mc h_{12}\mc h_{11}^{-1}) = -\angle(\mc h_{21}\mc h_{22}^{-1})$.
This condition is fulfilled in \eqref{eq:modelUB}, but not necessarily in \eqref{eq:model}.
Thus, for any time-sharing solution with $\varphi_1$ and $\varphi_2$ being chosen optimally in each strategy,
the average rates achieved in the enhanced interference channel are at least as high as in the original system.
This is the statement of Lemma~\ref{lem:enhanced}.

The idea of channel enhancement has previously been used, e.g., to study the MIMO broadcast channel \cite{WeStSh06}, the MIMO wiretap channel \cite{LiSh09}, and the MIMO relay channel \cite{GeHeWeUt15}.
In all these cases, the authors increased channel gains (or, equivalently, reduced noise) in a way that the resulting scenarios became degraded.
The enhanced channel constructed here is different in several respects. First of all, the objective is not making a MIMO scenario degraded.
Instead a single-input single-output (SISO) scenario is modified in a way that we obtain a standard form with real-valued channels (which is otherwise impossible except in special cases such as the one-sided interference channel).
Moreover, the enhancement is performed without changing the magnitude of any channel coefficient, but only by adapting the phases,
and it is remarkable that the obtained system can still be proven to be always superior or equal to the original one.

\begin{lemma}
\label{lem:proper_equal}
Assume a constraint that all transmit signals have to be proper without symbol extensions, and 
let $\mathcal{R}_\mt{proper}$ and $\bar{\mathcal{R}}_\mt{proper}$ denote the resulting time-sharing rate regions
of \eqref{eq:model} and \eqref{eq:modelUB}, respectively.
Then, under the assumptions of Theorem~\ref{th:main}, $\mathcal{R}_\mt{proper}=\bar{\mathcal{R}}_\mt{proper}$.
\end{lemma}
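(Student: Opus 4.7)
The plan is to reduce the lemma to a direct inspection of the per-symbol rate formula once properness is imposed. Since we are restricted to proper inputs without symbol extensions, in every time slot $\ms$ we have $\mc{\rtilde c}_{\mc x_k}^{(\ms)}=0$ for $k=1,2$. Plugging this into the definitions of $\mc{\rtilde c}_{\mc s_k}$ and $\mc{\rtilde c}_{\mc y_k}$ given below \eqref{eq:rk} yields $\mc{\rtilde c}_{\mc s_k}=\mc{\rtilde c}_{\mc y_k}=0$, so the second summand of \eqref{eq:rk} vanishes and the rate collapses to
\begin{equation*}
r_k(\mathcal X) = \log_2\!\left(\frac{|\mc h_{kk}|^2 c_{\mc x_k} + |\mc h_{kj}|^2 c_{\mc x_j} + c_{\noise_k}}{|\mc h_{kj}|^2 c_{\mc x_j} + c_{\noise_k}}\right),
\end{equation*}
which depends on the channel coefficients only through their squared magnitudes $|\mc h_{ij}|^2$.

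Next I would observe that the magnitudes of all channel coefficients coincide in \eqref{eq:model} and in the enhanced channel \eqref{eq:modelUB}, because the enhancement only replaces each $\mc h_{ij}$ with $|\mc h_{ij}|$. Consequently, for every tuple of proper transmit variances, the instantaneous rates $r_1,r_2$ are identical in both systems. Moreover, the time-sharing constraints \eqref{eq:primal:pow}--\eqref{eq:primal:pcov} involve only the variances $c_{\mc x_k}^{(\ms)}$ (with pseudovariances forced to zero by properness), so feasibility of a time-sharing policy $\{(\mathcal X^{(\ms)},\tau_\ms)\}_{\ms=1}^{\Ms}$ is independent of the channel phases as well. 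Hence any point in $\mathcal{R}_\mt{proper}$ is attained by the same policy in the enhanced channel and lies in $\bar{\mathcal{R}}_\mt{proper}$, and vice versa, which gives $\mathcal{R}_\mt{proper}=\bar{\mathcal{R}}_\mt{proper}$.

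There is essentially no obstacle here provided one sticks to the ``without symbol extensions'' setting of the lemma statement: the rate formula \eqref{eq:rk} is per single channel use, and under proper signaling the phases of the channel coefficients drop out completely. The only subtlety worth a sentence is to note why we may restrict to this per-symbol formula, namely that under the hypothesis ``without symbol extensions'' every strategy in the time-sharing formulation \eqref{eq:primal} uses a scalar proper Gaussian input per slot, so no memory across channel uses needs to be tracked; the phase-invariance argument then applies slot by slot before averaging with the weights $\tau_\ms$.
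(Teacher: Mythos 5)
Your argument is exactly the paper's proof, just written out in full: the paper's one-line justification is precisely the observation that with $\mc{\rtilde c}_{\mc x_1}=\mc{\rtilde c}_{\mc x_2}=0$ the rate \eqref{eq:rk} depends on the channels only through $|\mc h_{ij}|^2$, which your expansion makes explicit (including the correct remark that the constraints are phase-independent too). The proposal is correct and takes essentially the same route.
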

\begin{proof}
If $\mc{\rtilde c}_{\mc x_1}=\mc{\rtilde c}_{\mc x_2}=0$, $r_k(\mathcal X)$ does not depend on the phases of the channel coefficients.
\end{proof}

\begin{lemma}
\label{lem:proper_opt}
For the enhanced interference channel \eqref{eq:modelUB} under the assumptions of Theorem~\ref{th:main},
proper signaling without symbol extensions achieves the whole time-sharing rate region, i.e., 
$\bar{\mathcal{R}}_\mt{proper}=\bar{\mathcal{R}}$.
\end{lemma}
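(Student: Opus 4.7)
The plan is to exploit the fact that the coefficients of \eqref{eq:modelUB} are real and positive, so that the real and imaginary parts of each received signal decouple into two statistically identical parallel real-valued interference channels; this uses that $\noise_k$ is proper, so its real and imaginary components are i.i.d.\ with variance $c_{\noise_k}/2$. Given any time-sharing scheme achieving a rate tuple $\mathbf r\in\bar{\mathcal R}$, I would construct a proper time-sharing scheme that realizes the same average rates and no larger average powers.

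First, I would phase-align each slot. Within each slot $\ms$, the upper bound \eqref{eq:rkUB} can be attained simultaneously for both users because the consistency condition $\angle(\mc h_{12}\mc h_{11}^{-1})=-\angle(\mc h_{21}\mc h_{22}^{-1})$ reduces to $0=0$ in the enhanced channel. Aligning preserves the variances $c_{\mc x_k}^{(\ms)}$ and can only increase the rates, so without loss of generality the pseudovariances can be taken real with opposite signs. This makes the real and imaginary parts of $\mc x_k^{(\ms)}$ uncorrelated with nonnegative variances $p_{r,k}^{(\ms)}, p_{i,k}^{(\ms)}$ summing to $c_{\mc x_k}^{(\ms)}$, and the rate decomposes cleanly as $r_k(\mathcal X^{(\ms)})=g_k(p_{r,1}^{(\ms)},p_{r,2}^{(\ms)})+g_k(p_{i,1}^{(\ms)},p_{i,2}^{(\ms)})$, where $g_k(q_1,q_2)$ denotes the TIN rate on a single real subchannel with user powers $q_1,q_2$ and noise variance $c_{\noise_k}/2$.

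Second, I would split each aligned slot of weight $\tau_\ms$ into two proper-signaling slots of weight $\tau_\ms/2$, with complex variances $2p_{r,k}^{(\ms)}$ and $2p_{i,k}^{(\ms)}$, respectively. A proper input of variance $c$ in the enhanced channel contributes $g_k$ on each of the two real subchannels with identical arguments, yielding a total complex rate of $2g_k$; the factor of $2$ cancels the $1/2$ in the time-sharing weights, so the averaged rates reproduce $\sum_\ms\tau_\ms r_k(\mathcal X^{(\ms)})$ exactly. The averaged transmit powers likewise telescope back to $\sum_\ms\tau_\ms c_{\mc x_k}^{(\ms)}$, so \eqref{eq:primal:pow} is preserved, and every new slot uses a single-letter proper Gaussian scheme, so no symbol extensions are introduced.

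The main obstacle is the first step: simultaneous phase alignment generically fails in the original channel \eqref{eq:model}, which is exactly why Lemma~\ref{lem:enhanced} had to be established first to pass from $\mathcal R$ to $\bar{\mathcal R}$ before this argument becomes available. Once alignment is in hand, the remaining construction is essentially bookkeeping that exploits the parallel-subchannel decoupling and the equality of the two real subchannels' noise variances.
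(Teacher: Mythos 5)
Your construction is correct as far as it goes, and it is essentially the argument the paper itself gives in the main text: phase-align the pseudovariances in each slot (this is exactly the alternative derivation the paper sketches in a footnote, in place of invoking the diagonal-covariance result of \cite{BeLiNaYa16}), observe that the composite real covariance then becomes diagonal so the complex channel decouples into two parallel real subchannels with noise variance $c_{\noise_k}/2$, and then split each slot of weight $\tau_\ms$ into two proper half-weight slots carrying the two subchannel power profiles. Your factor-of-two bookkeeping (complex variance $2p_{r,k}^{(\ms)}$ so that each real subchannel sees power $p_{r,k}^{(\ms)}$, the doubled per-slot rate cancelling the halved weight) is right, and the averaged rates and powers telescope back as you claim.

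The gap is symbol extensions. In the paper, $\bar{\mathcal R}$ is the time-sharing region with possibly improper inputs \emph{and} possibly $T$-fold symbol extensions, whereas $\bar{\mathcal R}_{\mt{proper}}$ is restricted to proper inputs without extensions; the paper's remark on symbol extensions states explicitly that ruling out gains from extensions is part of the burden of this lemma (and of Theorem~\ref{th:main}). Your decomposition into exactly two statistically identical real subchannels only covers $T=1$: a $T$-letter extended input has a general $2T\times 2T$ composite real covariance matrix, and aligning the scalar pseudovariance phases no longer suffices to diagonalize it. The paper handles this by writing the extended system as a $2T\times 2T$ real MIMO channel with channel matrices $\mb U_{ij}\alpha_{ij}$, using the Hadamard inequality together with an anti-aligned ordering (channel-pairing) argument to show that diagonal transmit covariances are optimal once the condition \eqref{eq:MIMOR_condition} holds (which it does for the real-valued coefficients of \eqref{eq:modelUB}), and only then applying your slot-splitting trick over the $2T$ resulting subchannels. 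You need either to add this reduction or to restate the lemma for the no-symbol-extension region only, in which case your proof is complete.
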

We again give an intuitive justification without considering the possibility of symbol extensions and defer the formal proof to the appendix.
We switch to the composite real representation, where complex vectors $\mbc b$ and linear operations $\mbc b \mapsto \mbc A \mbc b$ (with a complex matrix $\mbc A$) are represented by
\begin{align}
\crr b&= \begin{bmatrix}
\Re\mbc b\\\Im\mbc b
\end{bmatrix}
&
\label{eq:crrmat}
\crr b &\mapsto \begin{bmatrix}
\Re\mbc A & -\Im\mbc A\\\Im\mbc A& \Re\mbc A
\end{bmatrix}\crr b .
\end{align}
For further details, see, e.g., \cite{AdScSc11}.
If $\mbc b$ is a random vector with covariance matrix $\mbc C_{\mbc b}$ and pseudocovariance matrix $\mbc{\rtilde C}_{\mbc b}$, its composite real covariance matrix (i.e., the covariance matrix of $\crr b$) is given by
(e.g., \cite{HeUt15})
\begin{align}
\mb C_{\crr b} &= \frac{1}{2}\left(\begin{bmatrix}
\Re \mbc{C}_{\mbc b} & - \Im \mbc{C}_{\mbc b} \\
\Im \mbc{C}_{\mbc b} & \Re \mbc{C}_{\mbc b}
\end{bmatrix}+\begin{bmatrix}
\Re \mbc{\rtilde C}_{\mbc b} & \Im \mbc{\rtilde C}_{\mbc b} \\
\Im \mbc{\rtilde C}_{\mbc b} & - \Re \mbc{\rtilde C}_{\mbc b}
\end{bmatrix}\right).
\label{eq:crrcov}
\end{align}

Since the enhanced interference channel \eqref{eq:modelUB} has real-valued channel coefficients, 
its composite real representation based on \eqref{eq:crrmat} reads as
\begin{subequations}
\begin{align}
\crr y_1 &= \begin{bmatrix}|\mc h_{11}| & 0 \\ 0 & |\mc h_{11}|\end{bmatrix} \crr x_1 + \begin{bmatrix}|\mc h_{12}| & 0 \\ 0 & |\mc h_{12}|\end{bmatrix} \crr x_2 + \crr\eta_1\\
\crr y_2 &= \begin{bmatrix}|\mc h_{21}| & 0 \\ 0 & |\mc h_{21}|\end{bmatrix} \crr x_1 + \begin{bmatrix}|\mc h_{22}| & 0 \\ 0 & |\mc h_{22}|\end{bmatrix} \crr x_2 + \crr\eta_2
\end{align}
\end{subequations}
with real-valued Gaussian noise $\crr\eta_k\sim\mathcal{N}(\zero,\frac{c_{\noise_k}}{2}\id_2),~\forall k$.
This description is mathematically equivalent to a symbol extension over two symbols in a real-valued system with constant channels.

For such a real-valued setting, it was shown in \cite[Th.~2]{BeLiNaYa16} that diagonal covariance matrices are optimal.
As \eqref{eq:crrcov} yields
\begin{align}
\mb C_{\crr x_k}^{(\ms)} &= \frac{1}{2}\begin{bmatrix}
c_{\mc x_k}^{(\ms)} + \Re \mc{\rtilde c}_{\mc x_k}^{(\ms)} & \Im \mc{\rtilde c}_{\mc x_k}^{(\ms)} \\
\Im \mc{\rtilde c}_{\mc x_k}^{(\ms)} & c_{\mc x_k}^{(\ms)} - \Re \mc{\rtilde c}_{\mc x_k}^{(\ms)}
\end{bmatrix}
\end{align}
this means that we can directly set\footnote{\label{foot:pcov_real}An alternative derivation is as follows. In the enhanced interference channel \eqref{eq:modelUB}, it is optimal to choose 
$\varphi_1=\pi+\varphi_2$
so that the upper bound \eqref{eq:rkUB} is achieved.
We can thus use ${\bar r}_k$ from \eqref{eq:rkUB} as rate expression in the enhanced system.
As ${\bar r}_k$ does not depend on $\varphi_1$ and $\varphi_2$,
we may choose $\varphi_2=0$ w.l.o.g., so that $\mc{\rtilde c}_{\mc x_1}=-\kappa_{\mc x_1}\leq 0$ and $\mc{\rtilde c}_{\mc x_1}=\kappa_{\mc x_2}\geq 0$ are both real-valued.} $\Im \mc{\rtilde c}_{\mc x_k}^{(\ms)}=0$, so that
\begin{align}
\mb C_{\crr x_k}^{(\ms)} &= \frac{1}{2}\begin{bmatrix}
c_{\mc x_k}^{(\ms)} + \mc{\rtilde c}_{\mc x_k}^{(\ms)} & 0 \\
0 & c_{\mc x_k}^{(\ms)} - \mc{\rtilde c}_{\mc x_k}^{(\ms)}
\end{bmatrix}=: \begin{bmatrix}
\pow_{k,1}^{(\ms)} & 0 \\
0 & \pow_{k,2}^{(\ms)}
\end{bmatrix}
\label{eq:per_carrier_powers}
\end{align}
Moreover, as remarked in \cite[Sec.~III]{BeLiNaYa16}, any rate achievable with the covariance matrices \eqref{eq:per_carrier_powers}
can also be achieved by time-sharing over strategies without symbol extensions (equivalent to $\pow_{k,1}^{(\ms)} = \pow_{k,2}^{(\ms)}$).
To see why this is true, we can use a similar argument as in \cite{NaNg19} and use $L'=2L$ time slots with $\tau_{\ms}'=\tau_{\lceil \ms/2 \rceil} / 2$ in \eqref{eq:primal},
and we can then set
\begin{align}
\pow_{k,1}^{\prime(\ms)} = \pow_{k,2}^{\prime(\ms)} =\begin{cases}
\pow_{k,1}^{(\lceil\ms/2\rceil)},\quad \text{$\ms$ odd}, \\
\pow_{k,2}^{(\lceil\ms/2\rceil)},\quad \text{$\ms$ even}.
\end{cases}
\end{align}
This does not change the value on the left hand side of \eqref{eq:primal:pow},
and since the diagonal covariance matrices $\mb C_{\crr x_k}^{(\ms)}$ lead to
\begin{align}
{\bar r}_{k}^{(\ms)} &= \sum_{n=1}^2 \frac{1}{2}\log_2\left(1 + \frac{|\mc h_{kk}|^2 \pow_{k,n}^{(\ms)}}{\frac{c_{\noise_k}}{2}+|\mc h_{kj}|^2 \pow_{j,n}^{(\ms)}}\right)
\end{align}
the value on the left hand side of \eqref{eq:primal:rate} remains unchanged as well.

Translating this back to the complex representation by means of \eqref{eq:per_carrier_powers}, we obtain a strategy which, for all time slots $\ms$, fulfills
\begin{equation}
c_{\mc x_k}^{\prime(\ms)} + \mc{\rtilde c}_{\mc x_k}^{\prime(\ms)} = c_{\mc x_k}^{\prime(\ms)} - \mc{\rtilde c}_{\mc x_k}^{\prime(\ms)} ~~\Leftrightarrow~~
\mc{\rtilde c}_{\mc x_k}^{\prime(\ms)}=0.
\end{equation}
Therefore, there always exists a solution to \eqref{eq:primal} with vanishing pseudovariances in all time slots, i.e., with proper signaling.
By extending this argumentation to also consider the possibility of symbol extensions in the complex setting (see the formal proof in the appendix), we obtain the statement of Lemma~\ref{lem:proper_opt}, which completes the proof of the main result.

\section{Algorithmic Solution}
\label{sec:algo}
Having established that proper Gaussian signals are the optimal Gaussian signals in the two-user interference channel with TIN,
we are interested in calculating the corresponding achievable rate region with coded time-sharing.
To this end, we consider the Lagrangian dual problem (e.g., \cite{BoVa09,BaShSh06}) of \eqref{eq:primal}.
This is a valid approach since
it is easy to verify that the so-called time-sharing condition from \cite{YuLu06} is fulfilled for this problem, which implies that its duality gap vanishes even though the rate expressions are nonconcave \cite{YuLu06}.
This zero-duality-gap property is also confirmed in Section~\ref{sec:recov}, where we recover a solution to the primal problem.

Let $\mb\pow^{(\ms)}=[\pow_1^{(\ms)},\pow_2^{(\ms)}]^\Tr$, and define the rate with proper signals as $r_k(\mb\pow):=\left.r_k(\mathcal{X})\right|_{\mathcal{X} = (\pow_1,\pow_2,0,0)}$ with $r_k(\mathcal{X})$ from \eqref{eq:rk}.
For the case of proper signals, the constraint \eqref{eq:primal:pcov} can be dropped.
We introduce the dual variables $\mb\mu=[\mu_1,\mu_2]^\Tr$ and $\mb\lambda=[\lambda_1,\lambda_2]^\Tr$,
and we dualize the constraints \eqref{eq:primal:rate}--\eqref{eq:primal:pow} to obtain the dual problem
\begin{align}
\label{eq:dual_start}
\min_{\substack{\mb\mu\geq\zero\\\mb\lambda\geq\zero}} ~\max_{\substack{\Ms\in\mathbb{N},R\in\mathbb{R}\\(\mb\tau\geq\zero): \one^\Tr\mb\tau=1}} ~\max_{(\mb\pow^{(\ms)}\geq\zero)_{\forall\ms}} ~~ \Lag
\end{align}
with the Lagrangian function
\begin{mueq}
\label{eq:lagrangian}
\Lag = R \,+ \sum_{k=1}^2 \Bigg(\mu_k \Bigg(\sum_{\ms=1}^\Ms \tau_\ms \, r_k(\mb\pow^{(\ms)}) -\rho_k R\Bigg)
\ifCLASSOPTIONdraftcls\else\\\fi
+ \lambda_k\Bigg(P_k-\sum_{\ms=1}^\Ms \tau_\ms \, \pow_k^{(\ms)}\Bigg) \Bigg).
\end{mueq}
The reformulation 
\begin{mueq}
\label{eq:lagrangian_reform}
\Lag = \left(1- \sum_{k=1}^2 \mu_k\rho_k \right)R ~+~ \sum_{k=1}^2 \lambda_k P_k 
\ifCLASSOPTIONdraftcls\else\\\fi
+\,\sum_{\ms=1}^\Ms \tau_\ms \sum_{k=1}^2\left( \mu_k \, r_k(\mb\pow^{(\ms)}) - \lambda_k \pow_k^{(\ms)} \right) 
\end{mueq}
reveals that the outer minimization must choose $\mb\mu$ in a way that $\mb\rho^\Tr\mb\mu= 1$.
Otherwise, the maximization over $R$ would be unbounded, which would clearly not be optimal in terms of the minimization over $\mb\mu$.

For the inner maximization over $\mb\pow^{(\ms)}$, we have to solve
\begin{equation}
\label{eq:inner_l}
\max_{\mb\pow^{(\ms)}\geq\zero} ~ f_{\mb\mu,\mb\lambda}(\mb\pow^{(\ms)})
\end{equation}
with
\begin{equation}
\label{eq:inner:func}
f_{\mb\mu,\mb\lambda}(\mb\pow)= \sum_{k=1}^2\left( \mu_k \, r_k(\mb\pow) - \lambda_k \pow_k \right).
\end{equation}
Since the objective function and the constraint set are the same for all $\ms$,
there exists a solution in which the optimizer of \eqref{eq:inner_l} is the same for all $\ms$.
We can thus write
\begin{equation}
\label{eq:inner}
\mb\pow^\star(\mb\mu,\mb\lambda) = 
\argmax_{\mb\pow\geq\zero} ~ f_{\mb\mu,\mb\lambda}(\mb\pow)
\end{equation}
without a dependence on $\ms$.

Consequently, the dual problem simplifies to
\begin{equation}
\label{eq:dual_simpl}
\min_{\substack{\mb\mu\geq\zero,\mb\lambda\geq\zero\\\mb\rho^\Tr\mb\mu= 1}}
~\max_{\substack{\Ms\in\mathbb{N}\\(\mb \tau\geq\zero): \one^\Tr\mb\tau=1}} ~
\sum_{k=1}^2 \lambda_k P_k
+ f_{\mb\mu,\mb\lambda}(\mb\pow^\star(\mb\mu,\mb\lambda))\sum_{\ms=1}^\Ms \tau_\ms.
\end{equation}
Since $\sum_{\ms=1}^\Ms \tau_\ms=1$ is a constant, the maximum operator can be dropped, i.e., we have to solve
\begin{equation}
\label{eq:dual_final}
\min_{\substack{\mb\mu\geq\zero,\mb\lambda\geq\zero\\\mb\rho^\Tr\mb\mu= 1}} ~~
\sum_{k=1}^2 \lambda_k P_k
+ f_{\mb\mu,\mb\lambda}(\mb\pow^\star(\mb\mu,\mb\lambda)).
\end{equation}

In the following subsections, we first discuss how this outer minimization can be solved
and how an optimal solution of the primal problem \eqref{eq:primal} can be reconstructed from the dual solution.
Afterwards, we discuss a method to solve the inner problem \eqref{eq:inner}.

\subsection{Outer Problem}
To solve the outer minimization, we can apply the cutting plane method \cite{Ke60,BaShSh06},
which successively refines a lower bound that is obtained by a relaxation of \eqref{eq:dual_final}.
To this end, we first introduce a slack variable $z$ and rewrite the problem as 
\begin{mueq}
\label{eq:dual_CP_prelim}
\min_{\substack{\mb\mu\geq\zero,\mb\lambda\geq\zero,z\in\mathbb{R}\\\mb\rho^\Tr\mb\mu= 1}} ~~
 z
\ifCLASSOPTIONdraftcls\quad\else\\\fi
\st~~ z \geq 
\sum_{k=1}^2 \lambda_k P_k +
f_{\mb\mu,\mb\lambda}(\mb\pow) ~~\forall \mb\pow\geq\zero.
\end{mueq}
This is equivalent to \eqref{eq:dual_final} since the maximizer $\mb\pow^\star(\mb\mu,\mb\lambda)$ in \eqref{eq:dual_final} corresponds to the value of $\mb\pow$ that leads to the strictest inequality in \eqref{eq:dual_CP_prelim}
due to $f_{\mb\mu,\mb\lambda}(\mb\pow^\star(\mb\mu,\mb\lambda))\geq f_{\mb\mu,\mb\lambda}(\mb\pow)$ for all $\mb\pow\geq\zero$.

A relaxed version of the problem can now be obtained by replacing the uncountable constraints on $z$ by a finite set of constraints, i.e.,
\begin{mueq}
\label{eq:dual_CP}
\min_{\substack{\mb\mu\geq\zero,\mb\lambda\geq\zero,z\in\mathbb{R}\\\mb\rho^\Tr\mb\mu= 1}} ~~
 z
\ifCLASSOPTIONdraftcls\quad\else\\\fi
\st~~ z \geq 
\sum_{k=1}^2 \lambda_k P_k +
f_{\mb\mu,\mb\lambda}(\mb\pow^{(\ms)}) ~~\forall {\ms\in\{1,\dots,\Ms\}}
\end{mueq}
with given constants 
$\mb\pow^{(\ms)}$ for $\ms\in\{1,\dots,\Ms\}$.
As a consequence,
$f_{\mb\mu,\mb\lambda}(\mb\pow^{(\ms)})$ is a linear function of $\mb\mu$ and $\mb\lambda$,
and the relaxed problem \eqref{eq:dual_CP} is a linear program, for which efficient standard solvers can be used.

By solving the relaxed problem, we obtain a lower bound to the optimal value of \eqref{eq:dual_final}.
This bound can be refined by increasing $\Ms$, i.e., by adding further constants $\mb\pow^{(\ms)}$,
and it gets eventually tight if $\mb\pow^{(\ms)}=\mb\pow^\star(\mb\mu^\star,\mb\lambda^\star)$ for some $\ms$, where $(\mb\mu^\star,\mb\lambda^\star)$ is the optimizer of \eqref{eq:dual_final}.

The cutting plane method \cite{Ke60,BaShSh06} summarized in Algorithm~\ref{algo:cp} is based on this idea of successive refinement of the lower bound.
Convergence of the generated sequence $z^{(\Ms)}$, $\Ms=2,3,\dots$ to the optimal value of \eqref{eq:dual_final}
can be concluded from the convergence proof in \cite{Ke60}.
To check for convergence, we can use an upper bound to the optimal value of \eqref{eq:dual_final}
that is obtained by setting $\mb\pow^{(\ms)}=\mb\pow^\star(\mb\mu^{(\ms)},\mb\lambda^{(\ms)})$
for some $(\mb\mu^{(\ms)},\mb\lambda^{(\ms)})$
and by calculating the achievable value 
\begin{equation}
\Psi_\ms = \sum_{k=1}^2\lambda^{(\ms)}_k P_k + f_{\mb\mu^{(\ms)},\mb\lambda^{(\ms)}}(\mb\pow^{(\ms)}).
\end{equation}
To initialize the cutting plane method, we can set $\mb\pow^{(1)}$ to an arbitrary feasible transmit strategy, e.g., $\pow_k^{(1)}=\frac{1}{2}P_k$ for all $k$.
\begin{algorithm}[h]
\caption{Cutting Plane Method for Problem~\eqref{eq:dual_final}}
\label{algo:cp}
Given $L=1$ and an initialization 
$\mb\pow^{(1)}$:
\begin{enumerate}
\item Solve the linear program \eqref{eq:dual_CP} and store the optimizer in $(\mb\mu^{(\Ms+1)},\mb\lambda^{(\Ms+1)},z^{(\Ms+1)})$.\label{item:cp_lp}
\item Solve \eqref{eq:inner} to obtain $\mb\pow^{(\Ms+1)}\gets\mb\pow^\star(\mb\mu^{(\Ms+1)},\mb\lambda^{(\Ms+1)})$.\label{item:cp_inner}
\item Set $\Ms\gets \Ms+1$, and repeat Steps~\ref{item:cp_lp}) and~\ref{item:cp_inner}) until
$
\min_{\ms\in\{2,\dots,\Ms\}} \Psi_\ms 
-
z^{(\Ms)}
\leq\cpeps$.
\end{enumerate}
\end{algorithm}

\subsection{Primal Recovery}
\label{sec:recov}
As described in \cite{BaShSh06}, a possible method to recover a primal solution, i.e., a solution to the original problem \eqref{eq:primal},
is to consider the dual linear program of the cutting plane problem \eqref{eq:dual_CP}.
This problem reads as
\begin{subequations}
\begin{align}
\label{eq:recov_first}
\max_{\mb \tau\geq\zero,R\in\mathbb{R}} \!\!\!\!&~~~~
\min_{\mb\mu\geq\zero,\mb\lambda\geq\zero,z\in\mathbb{R}} ~~
\\ R&+z(1-\one^\Tr\mb\tau)+\sum_{k=1}^2 \lambda_k \sum_{\ms=1}^{\Ms} \tau_\ms\left( P_k     - \pow_k^{(\ms)} \right)
\\ &+\sum_{k=1}^2 \mu_k     \left(-R\rho_k + \sum_{\ms=1}^{\Ms} \tau_\ms\, r_k(\mb\pow^{(\ms)})\right)
\end{align}
\end{subequations}
where we have introduced the dual variable $R$ for the constraint $\mb\rho^\Tr\mb\mu=1$ and the dual variables $\mb\tau=[\tau_1,\dots,\tau_\Ms]^\Tr$ for the constraints on $z$.
Since $\mb\tau$ and $R$ have to be chosen in the outer maximization in a way that avoids that the inner minimization is unbounded,
we obtain the following reformulation with three new constraints:
\begin{subequations}
\label{eq:recov}
\begin{align}
\max_{\substack{\mb \tau\geq\zero,R\in\mathbb{R}\\\one^\Tr\mb\tau=1}}
~~R \quad\st\quad  &\sum_{\ms=1}^\Ms\tau_\ms \, r_k(\mb\pow^{(\ms)})  \geq \rho_k R,~~\allk
\\&\sum_{\ms=1}^\Ms\tau_\ms \, \pow_k^{(\ms)} \leq P_k,~~\allk.
\end{align}
\end{subequations}

As linear programs have zero duality gap, \eqref{eq:recov} has the same optimal value as \eqref{eq:dual_CP}, which converges to the optimal value of \eqref{eq:dual_final}.
This optimum is an upper bound to the solution of the original primal problem \eqref{eq:primal}
since \eqref{eq:dual_final} is the Lagrangian dual problem of \eqref{eq:primal} (weak duality, e.g., \cite{BaShSh06}).
However, since any solution of \eqref{eq:recov} clearly corresponds to a feasible strategy in \eqref{eq:primal}, this value is at the same time a lower bound to the solution
of \eqref{eq:primal}.
This shows that strong duality holds for \eqref{eq:primal}, i.e., the duality gap is zero,
which is in line with the general considerations about the optimization of time-sharing strategies in \cite{YuLu06}.

The number of strategies $\Ms$ is obtained from the execution of the cutting plane algorithm.
In principle, this number can be arbitrarily high, but 
usually, only a small number of strategies obtain nonzero time-sharing weights $\tau_\ms$ when solving \eqref{eq:recov}.
As time-sharing can be interpreted as a convex hull operation on a connected set in a rate-power-space with $4$ dimensions, 
an extension to the Carath\'eodory Theorem discussed in \cite{HaRa51} implies that
there always exists an optimal solution of \eqref{eq:primal} that requires no more than $4$ active strategies.

\section{Algorithmic Solution to the Inner Problem}
\label{sec:inner}
The remaining missing element to compute a solution to problem \eqref{eq:primal} is a solver for the inner problem \eqref{eq:inner}.
In this section, we propose to apply the branch-and-bound algorithm \cite[Sec.~6.2]{Tu16} to obtain an \mbox{$\brbeps$-optimal} solution, i.e., 
a solution that is no more than $\brbeps$ away from the global optimum of \eqref{eq:inner}.

\subsection{Monotonicity Bounds}
By plugging in the expression for $r_k(\mb\pow)$ into \eqref{eq:inner:func}, we obtain
\begin{equation}
\brborigfun(\brbp)= \sum_{k=1}^2\left( \mu_k
\log_2\left(1 + \frac{|\mc h_{kk}|^2 \brbpk}{c_{\noise_k}+|\mc h_{kj}| \brbpk[j]}\right)
 - \lambda_k \brbpk \right)
\end{equation}
where we have used $\brborigfun$ as an abbreviation for $f_{\mb\mu,\mb\lambda}$.
We introduce the extended function
\begin{mueq}
\brbfun(\brbx,\brby)=
\ifCLASSOPTIONdraftcls\else\\\fi
 \sum_{k=1}^2\left( \mu_k  \log_2\left(1+\frac{|\mc h_{kk}|^2 \brbxk }{c_{\noise_k}+|\mc h_{kj}|^2 \brbyk[j]   }\right)   - \lambda_k \brbyk \right)
\end{mueq}
which is obviously nondecreasing in $\brbx =[\brbxk[1],\brbxk[2]]^\Tr\geq\zero$ and nonincreasing in $\brby=[\brbyk[1],\brbyk[2]]^\Tr\geq\zero$.
In the following, we establish an upper and a lower bound that are based on these monotonicity properties.

The inequality
\begin{align}
\label{eq:utop}
\brborigfun(\brbp)=\brbfun( \brbp,\brbp) \leq \underbrace{\brbfun( \brbb,\brba)}_{=:\brbU},~\forall \brbp\in\brbBox{\brba}{\brbb}
\end{align}
can serve as an upper bound to the best possible solution when $\brbp$ is restricted to a box $\brbB=\brbBox{\brba}{\brbb}=\{\brbp~|~\brba\leq\brbp\leq\brbb\}$.
Even though this is an utopian bound, i.e., there is usually no $\brbp$ for which equality holds in \eqref{eq:utop},
it becomes tight for $\brbb-\brba\to\zero$.

On the other hand, the optimal function value inside a box $\brbB=\brbBox{\brba}{\brbb}$
can be bounded from below by 
the achievable value
\begin{equation}
\label{eq:cbv}
\brbL:=\brbfun( \brba,\brba) 
=\brborigfun(\brba)
\leq \max_{\brbp \in\brbBox{\brba}{\brbb}} \brborigfun(\brbp).
\end{equation}

\subsection{Branch-And-Bound Solution}
The main idea of the branch-and-bound algorithm \cite[Sec.~6.2]{Tu16} is that subdividing a box $\hat\brbB=\brbBoxSmall{\hat\brba}{\hat\brbb}$ 
into a pair of smaller boxes $\brbB_1$ and $\brbB_2$ leads to refined bounds,
which ultimately become tight if the boxes converge to singletons.
The subdivision can be performed using the bisection rule \cite[Sec.~6.2]{Tu16}
\begin{subequations}
\label{eq:branch}
\begin{align}
\brbB_1 &= \brbBox{\hat\brba}{\hat\brbb - \frac{\hatbrbbk[k^\star] -\hatbrbak[k^\star]}{2} \mb e_{k^\star}}
\label{eq:branch1}\\\label{eq:branch2}
\brbB_2 &= \brbBox{\hat\brba +  \frac{\hatbrbbk[k^\star] -\hatbrbak[k^\star]}{2} \mb e_{k^\star}}{\hat\brbb}
\end{align}
\end{subequations}
where $\mb e_k$ is the $k$th canonical unit vector, and
\begin{equation}
\label{eq:branch_k}
k^\star = \argmax_{k\in\{1,2\}} ~~\hatbrbbk-\hatbrbak.
\end{equation}
The intuitive interpretation of this rule is that the box $\hat\brbB=\brbBoxSmall{\hat\brba}{\hat\brbb}$ is cut along its longest edge into two subboxes.

The branch-and-bound algorithm is summarized in Algorithm~\ref{algo:bb}.
For a proof that the procedure converges to an $\brbeps$-optimal solution, see \cite[Sec.~6.2]{Tu16}.
A possible initialization is discussed in the next subsection.
\begin{algorithm}[h]
\caption{Branch-and-Bound Method for Problem~\eqref{eq:inner}}
\label{algo:bb}
Given an initial set $\brbBset=\{\brbB_0\}$ such that the optimizer is contained in the box $\brbB_0$:
\begin{enumerate}
\item Find the box with the highest upper bound, i.e., $\hat\brbB = \argmax_{\brbB\in\brbBset} \brbU[\brbB]$ with $\brbUsymb$ defined in \eqref{eq:utop}.\label{item:brb_find}
\item Replace $\brbBset$ by\footnotemark $(\brbBset\setminus\{\hat\brbB\}) \cup \{\brbB_1,\brbB_2\}$ using \eqref{eq:branch}.\label{item:brb_branch}
\item Repeat Steps~\ref{item:brb_find}) and~\ref{item:brb_branch}) until
$\max_{\brbB\in\brbBset} \brbU[\brbB]-\max_{\brbB\in\brbBset} \brbL[\brbB] \leq \brbeps$ with $\brbLsymb$ defined in \eqref{eq:cbv}.\label{item:brb_converge}
\item Return the vector $\brbp$ that achieves $\max_{\brbB\in\brbBset} \brbL[\brbB]$.
\end{enumerate}
\end{algorithm}
\footnotetext{We use $\setminus$ to denote a set difference.}

\subsection{Initialization}
\label{sec:bb:init}
We introduce $\brbinitfun(\brbp)=\sum_{k=1}^2 \brbinitfun_k(\brbpk)$  with
\begin{equation}
\brbinitfun_k(\brbpk) = 
\mu_k \log_2\left(1+\frac{|\mc h_{kk}|^2 \brbpk }{c_{\noise_k}  }\right) - \lambda_k \brbpk
\end{equation}
where we have neglected the inter-user interference.
This results in an upper bound, i.e., $\brbinitfun(\brbp)\geq\brborigfun(\brbp)$.

The functions $\brbinitfun_k(\brbpk)$ are concave, and they tend to $-\infty$ for large values of $\brbpk$
since the logarithm grows sublinearly.
Due to these properties, it is possible to find 
$\brbinitfun_{\mt{max},k}=\max_{\brbpk\geq 0} \brbinitfun_k(\brbpk)$ for all $k$
by means of convex programming,
and we can apply a simple root finding method to
obtain a value $\brbpk[0,k]$ such that $\brbinitfun_k(\brbpk)+\brbinitfun_{\mt{max},j}\leq0,~\forall \brbpk\geq\brbpk[0,k]$
with $j=3-k$.\footnote{Note that constructing $\brbpk[0,k]$ such that $\brbinitfun_k(\brbpk)\leq0$ for $\brbpk\geq\brbpk[0,k]$ would not be sufficient since this would only guarantee that
$\brbinitfun(\brbp)\leq0$ if $\brbpk\geq\brbpk[0,k]$ holds for all $k$. In the following, we instead need that $\brbinitfun(\brbp)\leq0$ is already guaranteed if $\brbpk\geq\brbpk[0,k]$ holds for some $k$.}

If $\brbpk\geq\brbpk[0,k]$ for any $k$, we have $\brborigfun(\brbp)\leq\brbinitfun(\brbp)\leq0$.
It is thus clear that $\brborigfun$ takes its maximum inside $\brbB_0=\brbBox{\zero}{\brbp_0}$, where $\brbp_0=[\brbpk[0,1],\brbpk[0,2]]^\Tr$.
Therefore, $\brbBset=\{\brbB_0\}$ with $\brbB_0=\brbBox{\zero}{\brbp_0}$ can be used as an initialization for the branch-and-bound method.

\subsection{Related Literature and Remarks on the Complexity}
Algorithms from the field of monotonic optimization have been previously applied in various communication scenarios,
e.g., the polyblock method in
\cite{JoLa08,Br12,BrUt09,JoLa09,JoLa10,QiZhHu09,ZaBjSaJo17}
and the branch-and-bound method in
\cite{HeUtJo11,HeJoRiUt12,GrJoUt12,HeUt12}.

The concept of combining Lagrange duality with a monotonic optimization method for evaluating the dual function was proposed in \cite{Br12,BrUt09} based on the polyblock method,
and adopted in \cite{HeJoRiUt12} using the branch-and-bound method.
A common point of \cite{Br12,BrUt09,HeJoRiUt12} is that a rate space formulation was used, i.e., the per-user rates were used as optimization variables.
This means that an inner loop for finding a feasible transmit strategy has to be executed for each rate vector that the polyblock or branch-and-bound algorithm considers during its execution.
In this paper, we have instead formulated a monotonic optimization problem based on the transmit powers, so that the bounds can be explicitly calculated without an inner iteration.
A similar approach as the one considered here was pursued in \cite{He17} for a multiple-input single-output broadcast channel with TIN.

A particularity of solving problem \eqref{eq:inner} with the branch-and-bound method is that there is no upper bound on the transmit powers
since the power constraints of the original time-sharing problem \eqref{eq:primal} have been dualized.
This effect does not occur when optimizing pure strategies \eqref{eq:noTS} or strategies based on the convex hull formulation \eqref{eq:rts}.
Thus, the initialization could simply be constructed based on the constraint set in most of the papers referenced above.
Intuitively speaking, time-sharing would in principle allow us to use arbitrarily high transmit powers if the respective strategy is used only for a very short fraction of the total time.
This leads to the additional complication of having to find an appropriate initial box based on properties of the objective function instead of based on the constraints (see Section~\ref{sec:bb:init}).
Other examples where such a procedure is necessary can be found in \cite{HeJoRiUt12,He17}.

It needs to be mentioned that the monotonic optimization methods discussed above have in common that they are not adequate for online implementation due to their high computational complexity,
which grows exponentially in the number of optimization variables.
However, due to the non-convex nature of the considered problems, no globally optimal solution methods with lower computational complexity are known.
Therefore, applying monotonic optimization makes sense as a benchmark solution for evaluating the performance of heuristic methods in offline simulations,
or for producing numerical results needed to understand or illustrate fundamental aspects of the considered system.
Moreover, the problem considered in this paper has only two optimization variables (namely the transmit powers of the two users),
so that the overall computational effort remains manageable.

\section{Numerical Results}
\label{sec:num}
To illustrate the statement of Theorem~\ref{th:main} and the application of the proposed optimization method, let us reconsider a numerical example from \cite{HeUt18}.
In contrast to \cite{HeUt18}, where the numerical results were used to study the globally optimal time-sharing solution only under a restriction to proper signaling,
we now know from Theorem~\ref{th:main} that this solution coincides with the globally optimal time-sharing solution for the case where improper signaling is allowed.
This enables us to give much stronger statements in the interpretations of the results.

To allow for a simple comparison with the existing literature, we reconsider the numerical example from \cite[Fig.~3]{ZeYeGuGuZh13},
where the channel realization
\begin{subequations}
\begin{align}
\mc h_{11}&= 2.0310\,\E^{-\J0.6858} &
\mc h_{12}&= 1.4766\,\E^{\J2.6452} \\
\mc h_{21}&= 0.7280\,\E^{\J1.9726} &
\mc h_{22}&= 0.9935\,\E^{-\J0.6676}
\end{align}
\end{subequations}
was assumed with the same transmit power limitation for both users and a signal-to-noise ratio of $10\mt{dB}$ (transmit power over receiver noise).

Under a restriction to pure strategies with proper signaling, it is possible to compute Pareto optimal strategies in a globally optimal manner (see \cite{ZeYeGuGuZh13,LiZhCh12}).
By taking the convex hull of the obtained rate region, the globally optimal rate region for the convex hull formulation with proper signaling can be obtained.
These two rate regions in Fig.~\ref{fig:H1} can also be found in \cite{ZeYeGuGuZh13}.

Unlike for the case of proper signaling, we are not aware of a method to calculate globally optimal rate points for the case of pure strategies with improper signaling.
As an achievable rate region, we therefore use the largest improper signaling rate region given in \cite{ZeYeGuGuZh13}, which can be found by means of a grid search or by random sampling of transmit strategies.
The suboptimal algorithms proposed in \cite{ZeYeGuGuZh13} as well as the rank-one method from \cite{HoJo12} achieve smaller regions.\footnote{Each of the heuristic approaches
discussed in \cite{ZeYeGuGuZh13} performs well for certain choices of the rate profile vector $\mb\rho$, but not for all choices, i.e., not along the whole Pareto boundary.
Thus, we have instead plotted the rate region based on a grid search over all parameters in order to avoid being limited by the suboptimality of the employed heuristic method.}
In Fig.~\ref{fig:H1}, this curve is again accompanied by its convex hull.
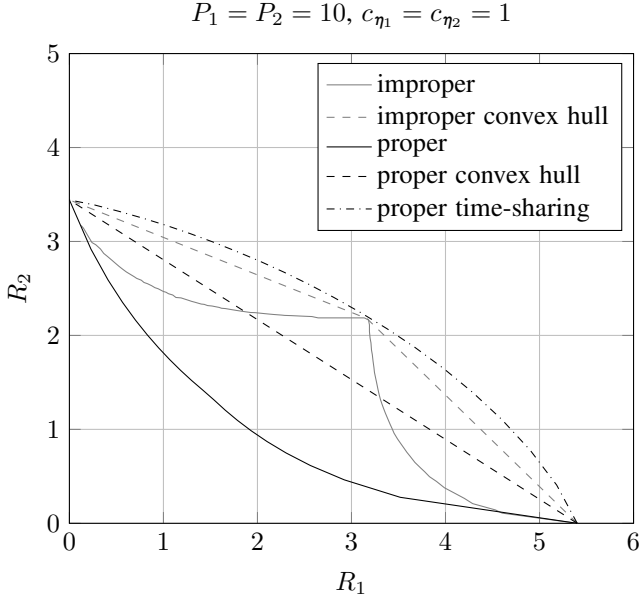
\begin{figure}
\ifCLASSOPTIONdraftcls\centering\fi
\hspace*{-4mm}
\begin{tikzpicture}

\begin{axis}[%
view={0}{90},
width=7.5cm,
height=6.25cm,
scale only axis,
xmin=0, xmax=6,
ymin=0, ymax=5,
xmajorgrids,ymajorgrids,
xlabel={$R_1$},
ylabel={\parbox{2cm}{\centering$R_2$\vspace*{-1.3cm}}},
title={$P_1=P_2=10$, $c_{\mcg h_1}=c_{\mcg h_2}=1$},
%
legend style={nodes=right}]

\addplot [
color=gray,
]
coordinates{
(5.40086611903573,0)(4.60266107126201,0.108337703374149)(4.53073398098703,0.142589604089484)(4.28871967108039,0.209640708312131)(4.20128820717661,0.265005514565712)(4.08801312669288,0.325572545275646)(3.98773612479149,0.378945728046728)(3.90422510476794,0.445951135114719)(3.8298769372171,0.496131721896257)(3.77844143351608,0.549084556042214)(3.73494579036219,0.589657759223288)(3.67436303175174,0.648495160749187)(3.62432732967087,0.712811988129783)(3.58890881563732,0.756511157327951)(3.55996232728577,0.798125994197855)(3.52653123149443,0.845297530716772)(3.48987312548718,0.897903738933411)(3.45268761152803,0.959404859658112)(3.43283017297167,1.00634724109307)(3.41138854704196,1.05305400863292)(3.38324248451187,1.11915546592205)(3.36051560960467,1.17305011092053)(3.34089666158138,1.22389489221609)(3.3237620314537,1.27139060374019)(3.30866389835419,1.31595130850952)(3.29501874497233,1.36754303625491)(3.28226082221646,1.42373845174281)(3.26706444064929,1.49882276548535)(3.25828653418821,1.53971053387636)(3.24443679672915,1.60623531287597)(3.23888898416865,1.65790066113023)(3.22942144877997,1.72662141324517)(3.22160228025766,1.79571559158396)(3.21287246435681,1.84417467588282)(3.20736939587985,1.92020973920951)(3.19796659159473,1.9709704605452)(3.19111896399775,2.04019802134161)(3.19111715359059,2.11192285885371)(3.18097681973327,2.16512927968183)(3.1358352373709,2.18663088626968)(3.02024834754187,2.1866394387415)(2.93142195658696,2.18664682849944)(2.83010095169838,2.18665620372852)(2.77355006595268,2.18666190395694)(2.6451118214402,2.18667618146044)(2.57964393410746,2.20076421357943)(2.50178248748547,2.20402497421213)(2.45945510431047,2.20592201033814)(2.366531036127,2.21041260536442)(2.2888675848961,2.2131485757997)(2.25984135612426,2.21615579153734)(2.17054553853398,2.22300058785966)(2.1263597070576,2.22702362453572)(2.07928965960931,2.23152643209454)(1.97479410795155,2.24236165132271)(1.92818553790554,2.2437744230854)(1.88196636159556,2.25019827058532)(1.83258179197168,2.25741655499581)(1.7795661684491,2.26558680082871)(1.72234285921209,2.27491126812548)(1.66018593425109,2.28565410902051)(1.61859060103083,2.28889110745972)(1.59216371472022,2.29816711186957)(1.51705283003301,2.31292953098611)(1.46938164116708,2.31778548932527)(1.43320211260175,2.33061195337702)(1.40174923544682,2.33392973365765)(1.33830649603622,2.35218315335676)(1.29839466305131,2.36296702788277)(1.26616480094979,2.37390156062489)(1.2189088179567,2.38918008184286)(1.18743897004335,2.39944929858821)(1.13302164464047,2.40381111828027)(1.09905658155698,2.43024815001373)(1.05312036139906,2.4441334940688)(1.027574888089,2.45954617630507)(0.998313084680911,2.46814178846959)(0.943967523334548,2.49675857720712)(0.908749062642385,2.5084544330616)(0.874488021244229,2.52926194782772)(0.830154201297225,2.55067988240081)(0.792948507885537,2.56360469313258)(0.765274953001493,2.58745869009332)(0.715404693124106,2.60718312595535)(0.689687137008973,2.6246449157971)(0.644847644753346,2.65183020747022)(0.606120557884038,2.67527298175866)(0.569134085995781,2.70274300507607)(0.529529860995311,2.73373200321373)(0.482442528694604,2.77400366844612)(0.451997239455957,2.80048006164333)(0.405703191374886,2.83973606644378)(0.364933593624962,2.87299099059772)(0.330339544064853,2.91998167962033)(0.280473252364763,2.96514362319881)(0.239942205209896,2.98853145588448)(0.197281573566926,3.059847844224)(0.149692553919725,3.14152036970126)(0.124651697453322,3.17001842304869)(0.124651697453322,3.17001842304869)(0.124651697453322,3.17001842304869) 
};

\addlegendentry{improper};

\addplot [
color=gray,
dashed
]
coordinates{
 (5.40085994659384,0)
 (3.16557756935338,2.1783993682412)(3.15415980886179,2.1831479276983)
 (0,3.44233918628745) 
};
\addlegendentry{improper convex hull};

\addplot [
color=black,
solid
]
coordinates{
 (5.40107596186976,0)(3.51825427595847,0.276891111131963)(2.92293730603897,0.462946630888997)(2.56302517564847,0.615326700476518)(2.30460085377627,0.748808967514082)(2.10181107379327,0.870597973220927)(1.93337418590764,0.98510288530766)(1.78761336282886,1.09545087386504)(1.65736212702736,1.204144085184)(1.53782082823529,1.31342389452243)(1.41831003656675,1.41831076785091)(1.29897255169928,1.52090199370718)(1.18275071283173,1.62792236433706)(1.06737393708362,1.74180473001137)(0.950602589831888,1.86567494092491)(0.830009189968913,2.00382868443439)(0.702675404651716,2.16262902735038)(0.56472769997794,2.35228754314096)(0.410384062901026,2.59110372102359)(0.229621070358315,2.9176903439978)(0,3.44236388446505) 
};
\addlegendentry{proper};

\addplot [
color=black,
dashed
]
coordinates{
 (5.40107596186976,0)(0,3.44236388446505) 
};
\addlegendentry{proper convex hull};

\addplot [
color=black,
dashdotted
]
coordinates{
 (5.40086202272847,0)(5.18210352137675,0.40784039228545)(4.90220021364098,0.776432237630412)(4.59951106371057,1.10424490891398)(4.28896472115683,1.39356911440881)(3.97917803591603,1.648229509616)(3.67501401360404,1.87251316722643)(3.3790546085659,2.07068732729902)(3.0918824631928,2.24638410102727)(2.81405187650699,2.40342735532415)(2.54494936027933,2.5449493600405)(2.28276878490375,2.67277884130697)(2.02654330964932,2.78929757239174)(1.77441703267137,2.89558542741268)(1.52589255591065,2.99473276021665)(1.27820099067067,3.08585016632009)(1.03026000382954,3.17081425205897)(0.779883366819783,3.24844800804947)(0.52599717282696,3.32101543978591)(0.26669930084895,3.3887358945948)(3.81206930803528e-09,3.44228843327902) 
};
\addlegendentry{proper time-sharing};

\end{axis}
\end{tikzpicture}%
\caption{Achievable rate regions with pure strategies, with the convex hull formulation, and with time-sharing (scenario from \cite[Fig.~3]{ZeYeGuGuZh13}).}
\label{fig:H1}
\end{figure}

When considering only these four curves, the conclusion is that improper signaling significantly enlarges the rate region compared to proper signaling \cite{ZeYeGuGuZh13}.
However, as shown in Theorem~\ref{th:main}, this result changes completely when considering the possibility of applying time-sharing.

To illustrate this, we have added the curve for proper time-sharing, which can be computed up to an arbitrarily small error tolerance
by means of the cutting plane algorithm described in Section~\ref{sec:algo} combined with the branch-and-bound method from Section~\ref{sec:inner} as a solver for the inner problem.
The remarkable result, which was already observed in \cite{HeUt18}, is that this proper time-sharing solution is not only able to keep up with the improper convex hull solution, but is even superior.
However, due to Theorem~\ref{th:main}, we now obtain an even stronger statement.
Even if we allow the combination of improper signaling and coded time-sharing (instead of taking the convex hull for improper signaling as done in \cite{ZeYeGuGuZh13}),
it is not possible to obtain a larger rate region than with proper signals and coded time-sharing.

\section{Discussion and Outlook}
\label{sec:conclusion}
Using improper signaling instead of proper signaling 
in the two-user Gaussian interference channel (with TIN and Gaussian inputs)
enlarges the rate region achievable with pure strategies 
as well as the convex hull of this rate region.
However, we have shown in this paper that
the situation changes if we consider coded time-sharing based on a time-sharing parameter.
In this case, proper signaling is optimal, i.e., it achieves the whole time-sharing rate region, and improper signaling cannot bring any advantages.

To better understand this result, note that improper signaling gives us additional flexibility compared to proper signaling,
and time-sharing gives us additional flexibility compared to pure strategies and to the convex hull formulation.
It was a~priori not clear whether combining both kinds of flexibility, i.e., improper signaling and time-sharing,
is necessary to achieve the full rate region of the two-user Gaussian interference channel (with TIN and Gaussian inputs).
According to Theorem~\ref{th:main}, this is not necessary.

An intuitive explanation why this is different in case of the convex hull formulation was given in \cite{HeUt17a} for the special case of a one-sided interference channel,
and it analogously applies to the system considered here.
When interpreting improper signaling as separately designing two real-valued data streams for each user,
we observe that reducing the power of one of these streams will give us the freedom to use more power for the other one.
The question is thus whether the same can be achieved by instead averaging between proper signaling strategies over time.
In the convex hull formulation, where the rates are averaged, but not the powers,
reducing the transmit power of a user in one time interval will help the other user by reducing the interference, but it will not allow us to use a higher transmit power in some other interval.
Therefore, using improper signals gives us more flexibility in this respect.
On the other hand, averaging between several strategies over time brings additional flexibility in the sense that the time-sharing weights, which indicate the lengths of the time slots, can be optimized in addition.

When using coded time-sharing based on a time-sharing parameter, we can do both:
we can trade power between time slots and we can arbitrarily adapt the time-sharing weights.
This apparently provides us with enough flexibility to achieve the whole TIN rate region,
and adding further flexibility by allowing improper signals does not bring any advantages.

However, note that this simplified intuitive argumentation was based on transmit powers only,
i.e., it does not consider the possibility of introducing correlations between the real and imaginary parts of the transmit symbols,
which corresponds to adapting the complex phase of the pseudovariances.
Therefore, this argumentation only applies in the case of a two-user system, for which it has been shown in this paper that the phases of the pseudovariances can be chosen such that all impropriety
corresponds to power imbalances between the real and imaginary parts in an equivalent system.
For three or more users, this does not apply since $|\mc{\rtilde c}_{\mc s_k}|^2$ in the denominator of the second summand of \eqref{eq:rk} is then no longer independent of the phases of the pseudovariances and of the channels.
Therefore, the argumentation used in Lemma~\ref{lem:enhanced} is no longer possible then.

The fact that the phases clearly matter in the $K$-user Gaussian interference channel with $K>2$ was already observed in \cite{CaJa08,CaJaWa10}.
It was then shown in \cite{CaJaWa10} that improper signaling can be necessary to achieve the maximum sum degrees of freedom (DOF) in the three-user Gaussian interference channel.
Since such a DOF study is valid regardless of whether or not the powers may be averaged, the result of \cite{CaJaWa10} implies that 
improper signaling can bring benefits in the three-user Gaussian interference channel even if coded time-sharing is allowed.
This does not contradict the observations in this paper, which are specific to the two-user case as explained above.

As the result in \cite{CaJaWa10} was based on a combination of improper signaling and symbol extensions,
we have also verified that symbol extensions do not help in the two-user Gaussian interference channel
at least in the considered case of constant channel coefficients.
An extension to time-varying channel coefficients could be considered in future research.

Further topics that should be considered are whether similar results as in this paper can also be obtained for
MISO, single-input multiple-output (SIMO), and MIMO interference channels with two users.
For all these scenarios, gains by improper signaling can be obtained when pure strategies or the convex hull formulation are considered,
but the combination of improper signaling and coded time-sharing has not yet been considered in these scenarios.
Moreover, extensions to coding schemes other than TIN should be considered in future research, e.g., to the whole Han-Kobayashi rate region.
Finally, studying the interplay of improper signaling and time-sharing is interesting also for discrete alphabets used in practical systems.

\appendix

We provide formal proofs of Lemmas~\ref{lem:enhanced} and~\ref{lem:proper_opt} taking into account the possibility of symbol extensions,
where $T$ channel uses are combined to a single transmission of a $T$-dimensional extended symbol vector (see, e.g., \cite{JaSh08,CaJa08,CaJaWa10,BeLiNaYa16,NaNg19}).
Note that we only consider scenarios with constant channel coefficients in this paper.

\begin{proof}[Proof of Lemma~\ref{lem:enhanced}]
The achievable rate with possibly improper signals and symbol extensions can be expressed in an equivalent $2T\times 2T$ real-valued MIMO system \cite{HeUt15}
with transmit covariance matrices $\mb Q_1$ and $\mb Q_2$. We have \cite{YeBl03}
\begin{align}
r_k &= \frac{1}{2T} \log_2 \frac{\det(\frac{c_{\noise_k}}{2}\id_{2T} + \mb H_{kk} \mb Q_k \mb H_{kk}^\Tr + \mb H_{kj} \mb Q_j \mb H_{kj}^\Tr )}
{\det(\frac{c_{\noise_k}}{2}\id_{2T} + \mb H_{kj} \mb Q_j  \mb H_{kj}^\Tr) }
\end{align}
and the channel matrices can be written in analogy to \eqref{eq:crrmat} as
\begin{align}
\label{eq:MIMOR_chan}
\mb H_{ij} = \underbrace{\id_T \otimes  \begin{bmatrix}\cos \theta_{ij} & -\sin\theta_{ij}\\ \sin\theta_{ij} & \cos\theta_{ij}\end{bmatrix}}_{=:\mb U_{ij}}\underbrace{|\mc h_{ij}|}_{=:\alpha_{ij}} 
\end{align}
where $\otimes$ is the Kronecker product.
Note that $\mb U_{ij}^{-1}=\mb U_{ij}^\bTr$.
Using the eigenvalue decomposition $\mb Q_k = \mb V_k \mb \Phi_k \mb V_k^\bTr,~\allk$, we then have 
\ifCLASSOPTIONdraftcls\else\begin{figure*}[t!]\fi
\begin{subequations}
\label{eq:MIMOR}
\begin{align}
\label{eq:MIMOR1}
r_k &= \frac{1}{2T} \log_2 \frac{\det(\frac{c_{\noise_k}}{2}\id_{2T} + \mb H_{kk} \mb V_k \mb \Phi_k \mb V_k^\bTr \mb H_{kk}^\bTr + \mb H_{kj} \mb V_j \mb \Phi_j \mb V_j^\bTr \mb H_{kj}^\bTr )}
{\det(\frac{c_{\noise_k}}{2}\id_{2T} + \mb H_{kj} \mb V_j \mb \Phi_j \mb V_j^\bTr \mb H_{kj}) } \\
\label{eq:MIMOR2}
&=\frac{1}{2T}\log_2 \frac{\det(\frac{c_{\noise_k}}{2}\id_{2T} + |\alpha_{kj}|^2 \mb \Phi_j + |\alpha_{kk}|^2 \mb V_j^\bTr \mb U_{kj}^\bTr \mb U_{kk} \mb V_k \mb \Phi_k \mb V_k^\bTr\mb U_{kk}^\bTr \mb U_{kj} \mb V_j)}
{\det(\frac{c_{\noise_k}}{2}\id_{2T} + |\alpha_{kj}|^2\mb V_j^\bTr \mb U_{kj}^\bTr \mb U_{kj} \mb V_{j} \mb \Phi_j) } \\
\label{eq:MIMOR3}
&=\frac{1}{2T}\log_2 \frac{\det(\mb D_k)\det(\id_{2T} + |\alpha_{kk}|^2 \mb D_k^{-1} \mb W_k \mb \Phi_k \mb W_k^\bTr )}{\det(\mb D_k)}
\leq\frac{1}{2T}\log_2 \det(\id_{2T} + |\alpha_{kk}|^2 \mb D_k^{-1} \mb {\tilde \Phi}_k) =: {\bar r}_k
\end{align}
\end{subequations}
\ifCLASSOPTIONdraftcls\else\end{figure*}
\eqref{eq:MIMOR} on the top of page~\pageref{eq:MIMOR} \fi
where
\begin{align}
\mb W_k &= \mb V_j^\bTr \mb U_{kj}^\bTr \mb U_{kk} \mb V_k, &
\mb D_k &= \frac{c_{\noise_k}}{2}\id_{2T}+|\alpha_{kj}|^2\mb \Phi_j 
\end{align}
and the diagonal matrix $\mb {\tilde \Phi}_k$ is a reordered version of $\mb \Phi_k$ that is arranged in a way that the $i$th largest entry of $\mb {\tilde \Phi}_k$
is at the same position as the $i$th smallest entry of $\mb D_k$.
The bound is due to the Hadamard inequality \cite[Sec.~7.8]{HoJo13} and due to the optimal ordering of $\mb {\tilde \Phi}_k$,
which can be shown in analogy to the optimality of channel pairing in the relay scenario in \cite{HaWi07}.\footnote{The main argument can be summarized as follows.
Let $x_1\geq x_2\geq0$, $y_1\geq y_2>0$, and $a>0$. Then,
$\log(1+a\frac{x_1}{y_1})+\log(1+a\frac{x_2}{y_2})
\leq
\log(1+a\frac{x_1}{y_2})+\log(1+a\frac{x_2}{y_1})$
is equivalent to
$1+a\frac{x_1}{y_1}+a\frac{x_2}{y_2}+a^2\frac{x_1x_2}{y_1y_2}
\leq
1+a\frac{x_1}{y_2}+a\frac{x_2}{y_1}+a^2\frac{x_1x_2}{y_2y_1}$
$\Leftrightarrow$
$(x_1-x_2)(\frac{1}{y_1}-\frac{1}{y_2})\leq 0$,
which is fulfilled.}

This upper bound ${\bar r}_k$
does not depend on $\mb U_{ij}$ or on $\mb V_k$,
and it is achievable for both users simultaneously if
we can find $\mb V_1$ and $\mb V_2$ such that
$\mb V_2^\bTr\mb U_{12}^\bTr \mb U_{11} \mb V_1=\id_{2T}= \mb V_1^\bTr\mb U_{21}^\bTr \mb U_{22} \mb V_2$.
This is the case if 
\begin{equation}
\label{eq:MIMOR_condition}
\mb U_{21}^\bTr \mb U_{22}\mb U_{21}^\bTr \mb U_{11}=\id_{2T}
\end{equation}
which is fulfilled for the real-valued channel coefficients in \eqref{eq:modelUB}, 
but not necessarily for the original system in \eqref{eq:model}.\footnote{Optimality of diagonal covariance matrices with anti-aligned entries (one increasing, the other decreasing) was
shown in \cite{BeLiNaYa16} for symbol extensions in the real-valued interference channel.
The difference to the equivalent real-valued system considered here is that the channels can be described by scaled identity matrices in \cite{BeLiNaYa16}
while we are facing the special structure given in \eqref{eq:MIMOR_chan}. As a result, the upper bound in \cite{BeLiNaYa16} is always achievable while we need \eqref{eq:MIMOR_condition} to hold for the upper bound in \eqref{eq:MIMOR3} to be achievable.}

As the upper bound ${\bar r}_k$ does not depend on the phases of the channel coefficients in $\mb U_{ij}$, it is equal for the original interference channel \eqref{eq:model}
and for the enhanced interference channel \eqref{eq:modelUB},
but it can be achieved with equality in the enhanced system.
\end{proof}

\begin{proof}[Proof of Lemma~\ref{lem:proper_opt}]
In the proof of Lemma~\ref{lem:enhanced}, we have already shown using \eqref{eq:MIMOR3} that the optimal rates ${\bar r}_k$ in the enhanced system \eqref{eq:modelUB} can be achieved with diagonal covariance matrices
by choosing $\mb V_1=\mb V_2=\id_{2T}$.
We have\footnote{We have substituted the diagonal entry $\tilde \Phi_{k,t}^{(\ms)}$ by $\Phi_{k,t}^{(\ms)}$ since the ordering will implicitly be optimized when optimizing the diagonal entries of $\mb \Phi_k^{(\ms)}$.}
\begin{equation}
\label{eq:rate_2T}
{\bar r}_k^{(\ms)} = \frac{1}{2T} \sum_{t=1}^{2T} \log_2 \left(1 + \frac{|\alpha_{kk}|^2 \Phi_{k,t}^{(\ms)}}{\frac{c_{\noise_k}}{2} + |\alpha_{kj}|^2 \Phi_{j,t}^{(\ms)}} \right) 
\end{equation}
for the rate in the $\ms$th strategy,
and the power constraint \eqref{eq:primal:pow} has to be replaced by
\begin{equation}
\label{eq:pow_2T}
\sum_{\ms=1}^\Ms \tau_\ms \frac{1}{T} \tr{\mb \Phi_k^{(\ms)} }\leq P_k, ~~\allk.
\end{equation}

Using a similar argument as in \cite{NaNg19}, we can use $L'=2TL$ time slots with $\tau_{\ms}'=\tau_{\lceil \frac{\ms}{2T} \rceil}/(2T)$ in \eqref{eq:primal},
and we can then set
\begin{align}
\Phi_{k,t}^{\prime(2T(\ms-1)+s)} = \Phi_{k,s}^{(\ms)}, \quad \forall t\in\{1,\dots,2T\}
\end{align}
for all $s\in\{1,\dots,2T\}$, $\ms\in\{1,\dots,L\}$, and $k\in\{1,2\}$.
This does not change the value on the left hand side of \eqref{eq:pow_2T},
and using \eqref{eq:rate_2T} as rate expression, the left hand side of \eqref{eq:primal:rate} remains unchanged as well.
Thus, there always exists an optimal solution with scaled identity matrices as covariance matrices.
Taking \eqref{eq:crrcov} into consideration, this corresponds to a strategy without symbol extensions and vanishing pseudovariances in each time slot.
\end{proof}

\bibliographystyle{IEEEtran}
\bibliography{ConfIEEE,IEEEabrv,literature}

\renewenvironment{IEEEbiography}[1]
  {\IEEEbiographynophoto{#1}}
  {\endIEEEbiographynophoto}
  
\begin{biography}{Christoph Hellings}
(Member, IEEE) received the B.Sc., Dipl.-Ing., and Dr.-Ing. degrees in electrical engineering (all with high distinction) from the Technical University of Munich (TUM) in 2008, 2010, and 2017, respectively.

He was a post-doctoral researcher and lecturer with the Methods for Signal Processing group, TUM, where his research focused on transmission strategies in communication systems with interference, including concepts such as improper signaling and transmission schemes for multicarrier systems. Moreover, his recent research included the application of machine-learning techniques in communication systems and the topic of energy-efficient communications. He taught courses on information transmission, data processing, machine learning, and the mathematical foundations of signal processing. In 2017, he was a guest lecturer with the Singapore Institute of Technology (SIT). He has recently joined the Quantum Device Lab, ETH Zürich, where his research includes the design and implementation of algorithms for quantum computation and quantum simulation as well as signal processing aspects in superconducting circuits.

Dr. Hellings received the Rohde \& Schwarz Award for his Ph.D. thesis in 2018 and an outstanding teaching assistant award from the student representatives of the TUM Department of Electrical and Computer Engineering in 2011. From 2012 to 2016, he was honored multiple times as an Exemplary Reviewer by the IEEE Communications Society. While being a student, he held a scholarship of the Max Weber Program of the Bavarian state, and for his Diploma thesis, he received an award from the German Association for Electrical, Electronic \& Information Technologies (VDE).
\end{biography}%

\begin{biography}{Wolfgang Utschick}
Wolfgang Utschick (Senior Member, IEEE) completed several years of industrial training programs before he received the Diploma degree in 1993 and the doctoral degree in 1998 in electrical engineering (both with distinction) with a dissertation on machine learning, from Technische Universit{\"a} M{\"u}nchen (TUM), M{\"u}nchen, Germany. Since 2002, he is a Professor at TUM where he is chairing the Professorship of Signal Processing. He teaches courses on signal processing, stochastic processes, optimization theory and machine learning in the field of wireless communications, various application areas of signal processing, and power transmission systems. Since 2011, he has been a regular Guest Professor at Singapore’s new autonomous university, Singapore Institute of Technology, and since 2017 he is serving as the Dean of the Department for Electrical and Computer Engineering, TUM. He holds several patents in the field of multiantenna signal processing and has authored and coauthored a large number of technical articles in international journals and conference proceedings and has been awarded with a couple of best paper awards. He edited several books and is Founder and Editor of the Springer book series Foundations in Signal Processing, Communications and Networking. He has been a Principal Investigator in multiple research projects funded by the German Research Fund (DFG) and coordinated a five years German DFG priority program on Communications Over Interference Limited Networks. He is a member of the VDE and therein a member of the Expert Group 5.1 for Information and System Theory of the German Information Technology Society. He is currently chairing the German Signal Processing Section. He also had been serving as an Associate Editor for IEEE TRANSACTIONS ON SIGNAL PROCESSING and had been member of the IEEE Signal Processing Society Technical Committee on Signal Processing for Communications and Networking.
\end{biography}%

\end{document}